\definecolor{refcolor}{RGB}{0,0,190}
\newtheorem{theorem}{Theorem}
\newtheorem{definition}{Definition}
\newtheorem{principle}{Principle}
\newtheorem{condition}{Condition}
\newenvironment{proof-argument}{\begin{proof}[Argument.]}{\end{proof}}
\newenvironment{proof-evidence}{\begin{proof}[Evidence.]}{\end{proof}}
\def\({\left(}
\def\){\right)}
\newcommand{\tn}{\textnormal}
\newcommand{\mc}[1]{\mathcal{#1}}
\newcommand{\ms}[1]{\mathscr{#1}}
\newcommand{\R}{\mathbb{R}}
\newcommand{\C}{\mathbb{C}}
\newcommand{\N}{\mathbb{N}}
\newcommand{\Z}{\mathbb{Z}}
\newcommand{\abs}[1]{\left|#1\right|}
\newcommand{\bra}[1]{\langle#1|}
\newcommand{\ket}[1]{|#1\rangle}
\newcommand{\schrod}{Schr\"odinger}
\newcommand{\pwt}{pilot-wave theory}
\def\hilbert{\mathcal{H}}
\def\sref #1{\S\ref{#1}}
\newcommand{\ie}{\emph{i.e.}\ }
\newcommand{\eg}{\emph{eg.}\ }
\newcommand{\cf}{\emph{cf.}\ }
\newcommand{\etc}{\emph{etc}}
\newcommand{\etal}{\emph{et al.}}
\newcommand{\x}{\mathbf{x}}
\newcommand{\M}{\mathbb{E}_3}
\newcommand{\D}{\mathbf{D}}
\newcommand{\n}{\mathbf{N}}
\newcommand{\V}[1]{\mathbb{V}_{#1}}
\newcommand{\nD}[1]{|{#1}|}
\begin{document}

\author{Ovidiu Cristinel Stoica}
\address{Department of Theoretical Physics, \\National Institute of Physics and Nuclear Engineering -- Horia Hulubei, Bucharest, Romania. Corresponding author.}
\email{cristi.stoica@theory.nipne.ro; holotronix@gmail.com}

\title{The post-determined block universe}

%------------------------------------------------------------%
\begin{abstract}

A series of reasons to take quantum unitary evolution seriously and explain the projection of the state vector as unitary and not discontinuous are presented, including some from General Relativity. This leads to an interpretation of Quantum Mechanics which is unitary at the level of a single world. I argue that unitary evolution is consistent with both quantum measurements and the apparent classicality at the macroscopic level. This allows us to take the wavefunction as ontic (but holistic), but a global consistency condition has to be introduced to ensure this compatibility. I justify this by appealing to sheaf cohomology on the block universe.

As a consequence, Quantum Theory turns out to be consistent with a definite four-dimensional spacetime, even if this may consist of superpositions of different geometries. 

But the block universe subject to global consistency gains a new flavor, which for an observer experiencing the flow of time appears as ``superdeterministic'' or ``retrocausal'', although this does not manifest itself in observations in a way which would allow the violation of causality. However, the block universe view offers another interpretation of this behavior, which makes more sense, and removes the tension with causality.

Such a block universe subject to global consistency appears thus as being post-determined. Here ``post-determined'' means that for an observer the block universe appears as not being completely determined from the beginning, but each new quantum observation eliminates some of the possible block universe solutions consistent with the previous observations.

I compare the post-determined block universe with other proposals: the presentist view, the block universe, the splitting block universe, and the growing block universe, and explain how it combines their major advantages in a qualitatively different picture.
\end{abstract}

%\pacs{03.65.Ta}{Foundations of quantum mechanics; measurement theory}
%\pacs{03.65.-w}{Quantum mechanics}

\keywords{foundations of quantum mechanics; interpretation of quantum mechanics; block universe; determinism; semi-classical gravity; quantum gravity}

\maketitle

%------------------------------------------------------------%
\section{Introduction}
\label{s:intro}

The most successful theories we have, Quantum Mechanics (QM) and (classical) General Relativity (GR), seem to be mutually inconsistent, to have some internal tensions, and also some tensions with other observed phenomena. 
Should we, in order to solve these tensions, make radical changes, or should we rather take a more conservative route?
%If we learned something from the history of physics, it is that even the best theories we have tend to be replaced by better theories, so we should expect that this will likely continue to happen. However, at least some important principles survived when better theories replaced the old ones.

The aim of this article is to show that an interpretation of QM \cite{Sto16aWavefunctionCollapse,Sto17UniverseNoCollapse}, which relies on a new kind of block universe, is capable to reduce some of the tensions between QM and GR, in a most conservative way.

It is reasonable to expect some conditions to be satisfied by any theory aiming to conciliate QM with GR.
In the following I will enumerate some of them. Unfortunately, no theory can satisfy all of them simultaneously, and there is no objective or unanimous criterion that would allow us to decide which ones to choose and which ones to abandon.
In this article, I will argue that adopting a conservative position about GR leads to a particular choice, which is also conservative about QM, although not all conditions will be satisfied.

\begin{condition}[Ontology]
\label{condition:ontology}
The theory should have an ontology.
\end{condition}

Classical GR has a well defined ontology. One of the tensions between QM and GR comes from the fact that it is difficult to associate an ontology to QM in a consistent way. Various interpretations propose different ontologies for QM, with different advantages and disadvantages. Therefore, it is important to see which of these proposals are consistent with the ontology of GR, or if GR can be changed to accommodate them.

\begin{condition}[Spacetime+quantum]
\label{condition:spacetimeQ}
The theory's entities should be defined on the relativistic spacetime.
\end{condition}

This is a problem dating since the formulation of wave mechanics by {\schrod}. The problem is that {\schrod}'s equation has as solutions wavefunctions defined on the configuration space, rather than on the physical $3$-dimensional space or on spacetime. This Condition is sometimes conflated with the Condition \ref{condition:ontology}, but they are different, since an interpretation can have an ontology defined on the configuration space. But a configuration space ontology may be in tension with Special and General Relativity, which are based on a four-dimensional spacetime.

Note that in GR the dimension of spacetime is a topological property, and the Lorentzian metric $g$ is a field defined on the four-dimensional topological manifold. It may be the case that when quantum effects are taken into account, $g$ itself becomes quantized as a field, without affecting the topology. There are proposals that the topology itself is affected, or that spacetime is in fact discrete at very low scales, but in this paper I will try to be as conservative as possible.
This leads to the following.
\begin{condition}[Spacetime+metric]
\label{condition:spacetimeg}
While the metric tensor can be treated as a field when quantized, it should remain defined on a four-dimensional topological manifold.
\end{condition}

The fact that repeated measurements of the same quantum system lead to inconsistencies when the corresponding operators do not commute is the reason for postulating the wavefunction collapse or the projection postulate. It is not specified what dynamics can lead to the collapse due to measurements, and this is also part of the measurement problem. 

\begin{condition}[Universality]
\label{condition:universal_law}
The dynamics should be universal, \ie valid at all times.
\end{condition}

The postulated wavefunction collapse or projection prescribes that the {\schrod} equation is suspended when the projection takes place. Having a law suspended and replaced with another at times means that the dynamics is not universal. Various interpretations come with different solutions. Not all of them solve the problem without requiring that the {\schrod} equation is suspended or broken during quantum measurements.

\begin{condition}[Conservation laws]
\label{condition:conservation}
The dynamics should satisfy the conservation laws.
\end{condition}

A perhaps less acknowledged prediction of Standard QM is that the projection postulate leads to violations of the conservation laws \cite{Sto17UniverseNoCollapse}. Conservation laws are fundamental in both the Hamiltonian and the Lagrangian formulations of physical theories. They are related to the symmetries of the theories, and the dynamics itself depends on the symmetries.

\begin{condition}[Relativity of simultaneity]
\label{condition:relativity_of_simultaneity}
The theory should satisfy relativity of simultaneity.
\end{condition}

A quantum theory consistent with Special Relativity needs to satisfy Lorentz and in fact Poincar\'e invariance, and hence relativity of simultaneity. To be consistent with GR, it needs to be invariant to diffeomorphisms, and also to satisfy the relativity of simultaneity, with respect to foliations of spacetime into spacelike hypersusrfaces.

\begin{condition}[Locality]
\label{condition:locality}
The dynamics of the theory should be local.
\end{condition}

It would be interesting to see if there is a way to have a definite ontology in a nonlocal theory, without violating the relativity of simultaneity (\cf Condition \ref{condition:relativity_of_simultaneity}).

\begin{condition}[Strong causality]
\label{condition:strong_causality}
The theory should satisfy strong causality.
\end{condition}
A theory satisfies strong causality if it can be formulated so that the state at any time $t_0$ depends on the states in the past times $t<t_0$ only.
This may seem a trivial demand, but not all interpretations of QM satisfy it, for various justifiable reasons.
Strong causality does not necessarily require determinism, it only means that one does not need to appeal to the future to determine the state or the probability distribution of the states at the time $t_0$.
Since most theories are either time reversible or at least partially reversible, the dynamics allows to infer information about the state at a time $t_0$ from the knowledge of future states at $t>t_0$, but the point of Condition \ref{condition:strong_causality} is that the dynamics of the theory can be formulated in a way that does not need to appeal to the future states.

\begin{condition}[Weak causality]
\label{condition:weak_causality}
The theory should satisfy weak causality.
\end{condition}
A theory satisfies weak causality if it can be formulated so that the coarse grained state or the probability distribution of the states at any time $t_0$ depends on the coarse grained states in the the past times $t<t_0$ only.
By \emph{coarse grained} states it is understood the class of equivalence of states, at a microscopic or fundamental level, which the observer does not distinguish.
Condition \ref{condition:weak_causality} is similar to Condition \ref{condition:strong_causality}, the difference being that Condition \ref{condition:weak_causality} applies to coarse grained states, and Condition \ref{condition:strong_causality} to micro states.

\begin{condition}[Free will]
\label{condition:free_will}
The theory should allow any experimental set-up possible in principle at the macro level.
\end{condition}
This notion of free will is not taken here in the sense from psychology or philosophy of mind, it just refers to the operational requirement that the agents are allowed by the physical laws to arrange the experimental setups for performing quantum measurements in any way possible in principle. For example, that they are not forced to orient a Stern-Gerlach device only along some particular directions. Being about the experimental setup, it refers to the coarse grained level where the world looks classical, and not to the micro or fundamental level.

\begin{condition}[Macro-quasiclassicality]
\label{condition:quasiclassicality}
The theory should be consistent with the observation that the world appears quasiclassical at macro level.
\end{condition}
Our mundane experience is consistent with a classical world. But we know that the world is quantum, and apparently {\schrod}'s equation predicts that one should expect strange superpositions at the macro level too. The fact that we do not see such superpositions needs to be explained by the theory, or at least allowed. The point is not to have full macrorealism (\ie any system should be in only one macro state at a time, and that state can be determined by noninvasive measurements), since this is prevented by the violation of the Leggett-Garg inequality \cite{leggett2002limitsQM,EmaryNeillFranco2013LeggettGarg,LeggettGarg1985QMvsMacrorealism,Leggett2008Realism}, but to allow, and if possible, to explain, how the quantum world can appear to us quasiclassical.
Perhaps a better way to formulate this condition is by requiring the coarse grained level of a quantum world to be like that of a classical world.

The fact that the world appears classical to the observer is also related to the \emph{measurement problem}, since measuring a system in a superposition is expected to lead to a superposition of different experimental outcomes, which include the apparatus being entangled with the possible outcomes as well, and even the observer. But the fact that we do not observe such states is just a consequence of the fact that the world has to appear quasiclassical at macro level. What the measurement problem contains in addition to this is the Born rule.
\begin{condition}[Probabilities]
\label{condition:probabilities}
The theory should predict, or at least be consistent with the probabilities of the outcomes as predicted by the Born rule.
\end{condition}

\begin{condition}[Quantum Einstein equation]
\label{condition:qeinstein}
The theory should allow QFT and GR combine in an Einstein-type equation.
\end{condition}

Perhaps Condition \ref{condition:qeinstein} is a weaker version of the condition of finding a theory of Quantum Gravity.
However, I would keep two conditions separate.
\begin{condition}[Perturbative quantum gravity]
\label{condition:qg}
The theory should allow perturbative quantum gravity.
\end{condition}

In addition, classical GR has singularities,
\begin{condition}[Singularities]
\label{condition:singularities}
The theory should have a solution to the problem of singularities.
\end{condition}

Combined with QM, this leads to the following:
\begin{condition}[Hawking's puzzle]
\label{condition:bh-info}
The theory should allow information to get out of the evaporating black holes.
\end{condition}

Conditions \ref{condition:ontology}, \ref{condition:spacetimeQ}, \ref{condition:spacetimeg}, \ref{condition:universal_law}, \ref{condition:conservation}, \ref{condition:relativity_of_simultaneity}, \ref{condition:locality}, \ref{condition:strong_causality}, \ref{condition:weak_causality}, \ref{condition:qeinstein}, \ref{condition:qg}, and \ref{condition:bh-info} seem more directly related to the tensions between QM and GR, but addressing them should not prevent satisfying conditions \ref{condition:free_will}, \ref{condition:quasiclassicality}, and \ref{condition:probabilities}, which are more specific to QM, and the GR condition \ref{condition:singularities}. However, the latter are interconnected with the former, and play an important role in this proposal.

The number of conditions to be satisfied is quite large and perhaps not exhaustive, but we do not need to satisfy all of them at once in order to remove at least some of the tensions between QM and GR. We will see that the approach proposed here satisfies some of these conditions.

In this article, I apply the conservative route of accepting unitary evolution, and push it as far as possible and see if it really breaks down during measurements, if it is really necessary to invoke a wavefunction collapse or many worlds. This is a natural continuation of some results developed in previous works of the author \cite{Sto08b,Sto12QMb,Sto12QMc,Sto16aWavefunctionCollapse,Sto17UniverseNoCollapse}, and of some earlier ones developed by Schulman \cite{schulman1984definiteMeasurements,schulman1997timeArrowsAndQuantumMeasurement,schulman2012experimentalTestForSpecialState,schulman2016specialStatesDemandForceObserver,Schulman2017ProgramSpecialState} (I will explain the difference between these two approaches in section \sref{s:collapse_problems}).
At the same time, I try to be conservative about GR as well. In the case of GR, the most common proposals are to modify the theory, or to obtain it as a limit of a supposedly better theory, most likely a theory of Quantum Gravity. But, again, before doing this, it would be useful to push the limits of the very principles of GR, and see if and when they break down. There are too many possibilities to take into account when advancing towards Quantum Gravity -- even if most of those that we know are incomplete or have problems -- and the only firm ground we have are the principles that we know to be well tested, so it may help to find out where their limits are, and see if they can agree with each other.

There are strong reasons to expect that the wavefunction is real, ontic, rather than a mere collection of probabilities about unseen states or variables or outcomes \cite{Spekkens2005Contextuality,HarriganSpekkens2010EinsteinIncompleteness,ColbeckRenner2011NoExtensionOfQM,ColbeckRenner2012Reality,PBR2012RealityOfPsi,Hardy2013AreQuantumStatesReal,Ringbauer2015MeasurementsRealityWavefunction,Myrvold2018PsiOntology}, and we will see some arguments supporting this in section \sref{s:reality_beyond_wavefunction}, which is about Condition \ref{condition:ontology}.
Section \sref{s:wavefunction_on_spacetime} deals with Condition \ref{condition:spacetimeQ}, and shows how this problem is addressed in \cite{Sto2019RepresentationOfWavefunctionOn3D}.
In section \sref{s:collapse_problems}, I will explain why, if there is a wavefunction collapse taking place like a discontinuous projection, this leads to violations of Conditions \ref{condition:conservation}  and \ref{condition:relativity_of_simultaneity}.
The possibility that what appears to be collapse takes place by unitary evolution alone is discussed in section \sref{s:no_collapse}. This turns out to be possible in a way consistent with the observations at a macro level (\cf Condition \ref{condition:quasiclassicality}) and to satisfy Conditions \ref{condition:locality} and \ref{condition:relativity_of_simultaneity}, but it can only happen for a small subset of the Hilbert space. This limitation of the allowed solutions gives the appearance of a conspiracy among the initial conditions of the quantum particles, or of retrocausality, depending on how you describe the order of the events in time. Retrocausality seems to prevent Condition \ref{condition:strong_causality} to be satisfied, but is consistent with Condition \ref{condition:weak_causality}. Superdeterminism appears to contradict Condition \ref{condition:free_will}. Surprisingly, even if it seems in the same category as superdeterministic proposals \cite{Sto12QMb}, the post-determined block universe proposal satisfies Conditions \ref{condition:strong_causality}, \ref{condition:weak_causality}, and \ref{condition:free_will}.

Retrocausal interpretations have been known for a long time to be able to save locality \cite{deBeauregard1977TimeSymmetryEinsteinParadox,SEP-2019qm-retrocausality,Rietdijk1978retroactiveInfluence,cramer1986transactional,cramer1988overview,Kastner2012TransactionalBook,Wharton2007TimeSymmetricQM,price2008toyRetrocausality,sutherland2008causallySymmetricBohm,Argaman2008Retrocausality,price2015disentangling,sutherland2017RetrocausalityHelps,EmilyAdlam2018SpookyActionTemporalDistance,aharonov1964time,aharonov1991complete,cohen2016quantum2classical,CohenCortesElitzurSmolin2019RealismAndCausalityI,WhartonArgaman2019BellThereomAndSpacetimeBasedQM}. Bell's theorem forces us to choose between nonlocality, which is at least ontologically at odds with Special Relativity (SR) even if satisfies the no-signaling principle, and retrocausality or other ways to violate statistical independence, which are or can be consistent with relativistic local causality, do not signal backwards in time, but require the past to depend on the future as well as the other way around.
Can there be a principle which, rather than modifying the dynamics or extending the theory with new variables, just keeps the ``good'' solutions of the Schr\"odinger equation, and exhibits this apparent retrocausality in a natural way? In section \sref{s:global_consistency} I argue that this can be achieved by a global consistency condition, without requiring modifications of the important principles, and in fact it can save them. Not only is this not in tension with Special or General Relativity, but in fact the latter can provide an explanation, or at least a framework, for the former. While it will be discussed how probabilities can arise, unfortunately a derivation of the Born rule in this framework is still an open problem. Some apparent tensions between QM and GR are discussed in section \sref{s:quantum_and_relativity}, and it is shown that in this framework some of them are naturally resolved or avoided.

Along this discussion, a version of the block universe emerges as the natural framework for this interpretation of QM (section \sref{s:post_determined_block_universe}). The observer experiencing the flow of time will have an alternative to the retrocausal description, in which the state of the universe is initially undetermined, and becomes more and more determined with each new measurement. From the bird's eye view of an observer outside of time, this appears merely as the result of the global consistency condition.
I call this framework the \emph{post-determined block universe} because of the necessity to take into account all of the constraints imposed by quantum measurements, before determining the global solution.

After the integrative summary and the discussion of the post-determined block universe given in Section \sref{s:post_determined_block_universe}, the remaining open problems are described in Section \sref{s:open}. In Section \sref{s:experiment} it is shown that the proposed interpretation is falsifiable through experiments.

%------------------------------------------------------------%
\section{Is there a reality beyond the wavefunction?}
\label{s:reality_beyond_wavefunction}

Schr\"odinger's equation was discovered when trying to explain the energy levels and structure of the atoms \cite{Sch26}. It was then extended to the relativistic case, including creation and annihilation of particles. The resulting Quantum Field Theory (QFT) provides an accurate description of the behavior of particles.

In the following I will refer to the evolution equation by the name of Schr\"odinger, even though the relativistic versions are due to Klein-Gordon, Dirac, and others, and even if field quantization is in place. I will do this for simplicity, based on the fact that these equations have the general form of a Schr\"odinger equation or the square of such an equation, and the evolution is unitary, of the form \eqref{eq:unitary_evolution}. % (see Fig. \ref{qm_u_evolution.pdf}).

\begin{equation}
\label{eq:unitary_evolution}
\ket{\psi(t)}=\hat U(t,t_0)\ket{\psi(t_0)},
\end{equation}
where $\ket{\psi(t)}$ is an operator defined on a suitable Hilbert space $\hilbert$, and $\hat U$ is the unitary evolution operator.
In QFT the state vectors $\ket{\psi(t)}$ are replaced by linear operators $\hat{\psi}$ acting on a special vacuum state $\ket{0}$ to create the states $\ket{\psi(t)}$, and their evolution is still unitary.

If this would be all there is to be explained, Quantum Theory would be the perfect theory. A simple story that explains almost everything: many-particle states, which evolve in time by being transformed by a unitary operator. Except for the fact that neither the wavefunction nor the quantum fields appear to be objects defined on the physical $3$-dimensional space, which will be handled in Section \sref{s:wavefunction_on_spacetime}, one can consider that $\ket{\psi}$ itself is \emph{ontic}, and maybe this could be all that is needed for the ontology. But this is not the full story.

The world appears macroscopically classical, and measurements have definite outcomes. A quantum observation always finds the observed system to be in an eigenstate of the Hermitian operator corresponding to the observed property, and the outcome to be an eigenvalue. Since the probability that the state was already an eigenstate by chance is zero, there seems to be an inconsistency between the unitary evolution and the fact that the observed system is always found to be in an eigenstate of the Hermitian operator.
To solve this inconsistency, it is proposed that the measurement itself is accompanied by some projection of the state vector which does not seem to follow from the Schr\"odinger equation itself, and even breaks it. This solution allows us to obtain a prediction of the probabilities associated to each outcome of the measurement, but we still need to understand how or why it happens, so we have what is called ``the measurement problem''. These problems are indicators that there's more to the story than it seems.

Bohr's solution was to take as given the classicality of the macroscopic level \cite{Bohr58}, and he described quantum measurements by assuming the apparatus and the outcomes of the measurements to be classical, and by accepting the wave-particle duality. Heisenberg's views cross-pollinated his ideas \cite{Hei58}, leading to the Copenhagen Interpretation, which avoids discussing anything but the outcomes of the measurements. It is often considered that restraining the discussion to the outcomes of the measurements makes the problems vanish, or even that these are not real problems. But not everyone is convinced that this makes the problems go away, as we know from the objections by some, including Einstein and de Broglie. Also Schr\"odinger explained the problem of classicality in his famous cat thought experiment \cite{schrodinger1935SchrodingerCat}.

Avoiding to discuss a problem is useful in solving other independent problems, but it does not make it go away. These foundational problems are important in particular if we really want to understand our world, particularly how General Relativity and Quantum Theory can coexist in a consistent way. 

The fact that they are indeed problems becomes apparent when we try to understand what the wavefunction is. When we are talking about atoms or other systems of particles which are stationary or even interact, the wavefunction seems to be like a classical field if the state is separable, or at least like a classical field on the configuration space in general. 
Something has to be real there, something has to carry the energy and momentum, and curve spacetime according to the Einstein equation. The classical limit of quantum electrodynamics comes with the appropriate stress-energy tensor, which seems to be distributed in space like the wavefunction is. The inertia properties of matter, even when they are not subject to quantum measurement, are consistent again with such a stress-energy tensor. 

But when we perform a measurement, the wavefunction seems to turn into a probabilistic device, whose role is to predict the probabilities to jump from one state to another during a measurement. The probability is given by the squared scalar product between the state before and the eigenstate after the measurement (which is in fact the \emph{Born rule} \cite{Born26} extended from positions to any observable).

Is this a simple ambiguity, or a contradiction? How can we interpret the wavefunction as being ontic when we do not look at it, and epistemic during measurements? Obviously the Copenhagen Interpretation chooses a quick way out, by denying the reality of the wavefunction (or at best claiming that it is irrelevant), to avoid this contradiction.

This apparent contradiction is visible in the more rigorous mathematical formulations of Quantum Mechanics by Dirac \cite{Dirac58PPQM} and von Neumann \cite{vonNeumann1955MathFoundationsQM}. Accordingly, the state vector is always well defined and has a unitary time evolution, according to the Schr\"odinger equation, but when it is measured, it projects to an eigenstate of the operator corresponding to the observable, according to the Born rule. The apparent conflict becomes manifest when we try to formulate the theory in a mathematically rigorous way, and the infamous \emph{wavefunction collapse} seems to be unavoidable.

As the father of General Relativity, Einstein realized this tension between realism and the purely epistemic view of the Copenhagen Interpretation in its full depth. He never ceased to hope that there is a better explanation, and considered that Quantum Mechanics is incomplete in some sense. He made his most concrete formulation of the problem together with his collaborators Podolsky and Rosen in \cite{EPR35}, where they proposed the famous \emph{EPR experiment}.

There are attempts to regard particles as well-localized to a point, and to interpret the wavefunction as just giving the probability to find the point-particle at given positions. But even the most successful theory that postulates point-particles, \emph{de Broglie-Bohm theory} \cite{db1956double_solution,db1956double_solution_brief,Bohm52,DGZ1996BohmianMechanics}, attributes an ontological status to the wavefunction. Regardless of the proposed interpretation of particles as points, to avoid inconsistencies with the experiments, we have to assign physical properties like charge and mass densities, and ultimately all physical properties except for definite positions, to the wave itself. Both de Broglie and Bohm \cite{Bohm52,Bohm95,Bohm2004CausalityChanceModernPhysics}, as well as another major supporter of this interpretation, Bell ~\footnote{According to Bell, \emph{``No one can understand this theory until he is willing to think of [the wavefunction] as a real objective field rather than just a `probability amplitude'. Even though it propagates not in 3-space but in 3N-space.''} \cite{Bell2004SpeakableUnspeakable} p. 128.}, realized this, and kept both the point particle and the wavefunction as ontic.

There are other reasons to take the wavefunction as ontic, rather than as a purely epistemic or probabilistic device, as implied by some results and no-go theorems \cite{Spekkens2005Contextuality,HarriganSpekkens2010EinsteinIncompleteness,ColbeckRenner2011NoExtensionOfQM,ColbeckRenner2012Reality,PBR2012RealityOfPsi,Hardy2013AreQuantumStatesReal,Ringbauer2015MeasurementsRealityWavefunction,Myrvold2018PsiOntology}. To summarize, these results show one thing: under reasonable assumptions, the wavefunction is not a mere predictor of the probabilities, it does not have redundant parts, it does not contain useless information. In other words, even if it will be replaced in an interpretation by something more concrete or more appropriate for one purpose or another, the structures that replaces it will have to contain the complete information contained in the original wavefunction. In other words, whatever ontology will be proposed for QM, it will have to include the wavefunction itself.

This does not mean that the wavefunction does not have an epistemic or probabilistic role to play. It has to have, to account for the quantum probabilities. These are not mutually exclusive, as we shall see in Section \sref{s:global_consistency}. For the moment, let us see how the wavefunction can be seen as an object on the $3$D-space or the $4$D spacetime.

%------------------------------------------------------------%
\section{The wavefunction on spacetime}
\label{s:wavefunction_on_spacetime}

The nonrelativistic wavefunction for $\n$ particles is defined on the configuration $3\n$-dimensional space rather than the physical $3$-dimensional space, and the quantum field is defined as an operator on an even more abstract and higher-dimensional space. This problem bothered from the dawn of QM physicists like {\schrod}, Lorentz, Heisenberg, Einstein and others \cite{Przibram1967LettersWaveMechanics,Przibram2011LettersWaveMechanics,Schrodinger1926QuantisierungAlsEigenwertproblem,Schrodinger2003CollectedPapersWaveMechanics,Howard1990EinsteinWorriesQM,FineBrown1988ShakyGameEinsteinRealismQT,Bohm2004CausalityChanceModernPhysics}.

In \cite{Sto2019RepresentationOfWavefunctionOn3D} a representation of the wavefunction on the $3$D-space $\M$ was constructed. The representation is in terms of classical fields on the $3$D-space. As long as the evolution is unitary, \ie without collapse, following the {\schrod} equation, all interactions and the propagation are local, satisfying Condition \ref{condition:locality}.

The construction is made in multiple steps. First, separable states of the form
\begin{equation}
\Psi(\x_1,\ldots,\x_\n)=\psi_1(\x_1)\ldots\psi_\n(\x_\n)
\end{equation}
can be seen, up to a phase factor, as $\n$ wave functions defined on the $3$D-space. If each of the functions $\psi_j(\x_j)$, $j\in\{1,\ldots\n\}$, is represented as a section of a vector bundle $\V{\M,\D_j}$, where the vector space which is the fiber represents the spin and the internal degrees of freedom, labeled with elements from a set $\D_j$, then we can represent the $\n$ functions $\psi_1(\x_1),\ldots,\psi_\n(\x_\n)$ on the direct sum of these vector bundles, $\V{\M,\D_1}\oplus\ldots\oplus\V{\M,\D_\n}$. 
But this representation is redundant, because of the phase factors. To eliminate the redundancy, instead of the sum of the $\n$ vector bundles, a quotient is made by an equivalence relation $\sim$, defined for $\n=1$ as $(\psi)\sim(\psi')$ iff $\psi=\psi'$.
For $\n>1$, it is defined by
\begin{equation}
\label{eq:btimes_equiv_n}
(\psi_1,\ldots,\psi_\n)\sim(\psi'_1,\ldots,\psi'_2)
\end{equation}
iff there are $\n$ complex numbers $c_1,\ldots,c_\n$, so that $c_1\ldots c_\n=1$, and the linear transformation of $\V{\M,\D_1}\oplus\ldots\oplus\V{\M,\D_\n}$ dewfined as
\begin{equation}
\label{eq:btimes_transf_n}
T=
\(
 \begin{matrix}
  c_1 1_{\nD{\D_1}} & 0 & \ldots & 0\\
	\ldots & \ldots & \ldots & \ldots \\
  0 & \ldots & 0 & c_\n 1_{\nD{\D_\n}}
 \end{matrix}
\),
\end{equation}
satisfies $(\psi'_1,\ldots,\psi'_2) = T\(\psi_1,\ldots,\psi_\n\)$.

We see that even for separable states the construction is not straightforward. For nonseparable states, one can imagine that we can sum the elements of the bundle obtained by factoring out the equivalence relation $\sim$. But this would not work, instead, we need to take a basis of the Hilbert space of each type of particle, make the construction for the products of the elements of these bases, and only then take the direct sum of the resulting bundles.
This is shown in \cite{Sto2019RepresentationOfWavefunctionOn3D} to be a faithful representation of the tensor products of one-particle Hilbert spaces, as fields on $\M$.

But this is still not enough, since the construction had to be shown to satisfy \emph{local separability}, \ie it should allow the definition of fields on subsets  from $\M$, so that if the fields defined locally on two subsets $A,B\in\M$ are equal on the intersection $A\cap B$, they can be extended as a field on the union $A\cup B$. In \cite{Sto2019RepresentationOfWavefunctionOn3D}, this was done by promoting the global symmetry defined by transformations of the form \ref{eq:btimes_transf_n} to a local symmetry. In other words, local separability was obtained as a gauge theory on the bundle construction used to represent the wavefunction as fields on $\M$.

Then, the propagation of the fields representing free particles was shown to take place locally on $\M$. If the potentials also propagate locally, the entire {\schrod} was represented as a local field equation on $\M$. All interactions and propagations are local as long as they are so in the classical theory that was quantized. Locality holds as long as there is no collapse.

The construction was applied to represent states in QFT, again, in terms of classical fields on $\M$.

In the article \cite{Sto2019RepresentationOfWavefunctionOn3D} it is acknowledged that this representation is artificial and has the purpose to prove the possibility of such representations, the proof of concept. Perhaps simpler and more natural ways to do this will be found in the future, but what is important to keep in mind is that it is not the case that QM or QFT require with necessity that the objects cannot be defined on the $3$D-space $\M$ or on spacetime. In fact, the construction, even as complicated as it may seem, allows the physicist to continue to use the same tensor product formalism as usually, and at the same time to conceptualize the wavefunctions or fields as existing in multiple layered fields on $\M$ or on spacetime.

For the reasons mentioned in Section \sref{s:reality_beyond_wavefunction}, I suggest that if we insist on assigning an ontological interpretation to quantum particles, this should be done by considering the wavefunctions or quantum fields as fields, similar to the classical ones. Now we see that, while the wavefunction is normally defined on the configuration space and not simply on the physical space, this is just a particularity of the representation, and a representation on the $3$D-space is possible. Moreover, these fields behave locally under the unitary evolution.
Therefore, assigning an ontological interpretation to the wavefunction as object in the physical space is possible, as long as no collapse is involved.
We will see in the following that the wavefunction collapse introduces other problems.

%------------------------------------------------------------%
\section{What would happen if the wavefunction collapses would be discontinuous?}
\label{s:collapse_problems}

Although it is in general considered a settled issue that there is a collapse of the wavefunction, or a projection postulate (in the case of the MWI there is an effective collapse for the single worlds), I would like to reopen this case and challenge it. Here are some reasons which I think make the case that it worth considering the possibility that single worlds may evolve unitarily, without collapse, and still allow quantum measurements to have definite outcomes as we observe.
This will not mean to leave unexplained the probabilistic nature of quantum measurements, and I will discuss this in the subsequent sections. For the moment, I would just want to list some of the arguments to consider the possibility that the collapse does not take place in a discontinuous way which violates the unitary evolution.

A first reason is that without collapse, Condition \ref{condition:ontology}, of ontology, would be easily satisfied. We just take the wavefunction as being the real entity in any quantum theory. Moreover, as explain in Section \sref{s:wavefunction_on_spacetime} and \cite{Sto2019RepresentationOfWavefunctionOn3D}, this ontology can be interpreted as being on the physical $3$D-space, \cf Condition \ref{condition:spacetimeQ}.

A condition we used to expect to be satisfied by any fundamental theory is that of universality (Condition \ref{condition:universal_law}), that the law is never broken. This changed with the advent of QM, especially since the dominant view was the Copenhagen Interpretation since the beginning. It is true that this is a subjective matter, since it depends on the interpretation preferred by the physicist or philosopher of physics, but nevertheless, it would be of interest at least to some of them to have a dynamical law that remains true all the time. In fact, this is achieved in MWI.

Another reason is that only unitary evolution can ensure the conservation laws (Condition \ref{condition:conservation}). A discontinuous collapse would break them \cite{Sto17UniverseNoCollapse}. I will come back to this in the remaining part of this section with evidence that collapse would lead to such violations. This may again be a matter of taste, but it is a scientific bet, since it is possible at least in principle, to test whether or not the conservation laws are broken by quantum measurements in our universe (see Section \sref{s:experiment}). Again, by being unitary, MWI is able to ensure the conservation laws, but not for the branches, and this has the same experimental consequences as an actual collapse.

The possibility to explain the outcomes of measurements without collapse would also ensure that Conditions \ref{condition:relativity_of_simultaneity} (relativity of simultaneity) and \ref{condition:locality} (locality) are satisfied (\cf Section \sref{s:wavefunction_on_spacetime}).

In this Section, I will discuss in more detail how collapse violates conservation laws (Condition \ref{condition:conservation}), and why unitary evolution ensures them.

We could guess already that as long as we assume that the wavefunction collapses, the conservation laws are violated, because in Quantum Mechanics the conserved quantities are those whose operators commute with the Hamiltonian, but during a collapse the Hamiltonian evolution is replaced by a projection, which does not commute with the operators corresponding to conserved quantities. Such violations are shown to accompany the collapse explicitly in \cite{Burgos1993ConservationLawsQM,Sto17UniverseNoCollapse}.
This violation happens even if we include the apparatus or the environment, if we assume the \emph{standard measurement scheme}.

Let $\hat{\mc{O}}$ be the operator corresponding to the measured property $\mc{O}$. Suppose $\hat{\mc{O}}$ has a complete set of unit eigenvectors $\ket{j}$, and let $\lambda_j$ be its eigenvalues. Let $\ket{\psi}$ is the state whose property $\mc{O}$ will be measured. Let us consider now that the apparatus also includes the environment, so that the whole universe consists of the observed system in the state $\ket{\psi}$, and the apparatus ready for the experiment, in the state $\ket{\tn{ready}}$.
According to the standard measurement scheme, the apparatus is such that, if the observed system is in an eigenstate $\ket{j}$ of $\hat{\mc{O}}$, the evolution governed by the {\schrod}'s equation leaves the observed system unchanged, and the apparatus ends out in a state $\ket{\tn{outcome }j}$, 
\begin{equation}
\label{eq:outcomej}
\ket{j} \ket{\tn{ready}} \stackrel{\tn{U}}{\longrightarrow} \ket{j} \ket{\tn{outcome }j}.
\end{equation}
The states $\ket{\tn{outcome }j}$ are considered mutually orthogonal for $j\neq j'$, and orthogonal on the state $\ket{\tn{ready}}$.

Following \cite{Burgos1993ConservationLawsQM}, consider that $\hat{A}$ corresponds to a conserved quantity. For simplicity, let's assume that this is an additive quantum quantity, because for multiplicative quantities the proof is similar. Let $\hat{A}^{'}$ and $\hat{A}^{''}$  be the corresponding operators for the observed system, respectively for the system consisting of the apparatus (including the environment). Then, the total conserved operator $\hat{A}$ satisfies
\begin{equation}
\hat{A}=\hat{A}^{'}\otimes 1^{'} + 1^{''}\otimes \hat{A}^{''},
\end{equation}
where $1^{'}$ and $1^{''}$ are the identity operators on the observed system, respectively the apparatus. 

The conserved quantities are those corresponding to the total operator $\hat{A}$. This implies that, under the unitary evolution, its \emph{mean value} for the total system $\ket{\Psi}$,
\begin{equation}
\langle\hat{A}\rangle_{\ket{\Psi}} := \bra{\Psi}\hat{A}\ket{\Psi},
\end{equation}
is conserved. Therefore, considering Eq. \eqref{eq:outcomej}, the mean value of $\hat{A}$ for the full system satisfies
\begin{equation}
\langle\hat{A}\rangle_{\ket{j} \ket{\tn{ready}}} = \langle\hat{A}\rangle_{\ket{j} \ket{\tn{outcome }j}}.
\end{equation}
But since both states are product states of the observed system and the apparatus, from Eq. \eqref{eq:outcomej} it follows that
\begin{equation}
\label{eq:mean_value_additive}
\langle\hat{A}^{'}\rangle_{\ket{j}} + \langle\hat{A}^{''}\rangle_{\ket{\tn{ready}}} = \langle\hat{A}^{'}\rangle_{\ket{j}} + \langle\hat{A}^{''}\rangle_{\ket{\tn{outcome }j}},
\end{equation}
from which we get
\begin{equation}
\label{eq:mean_value_measurement}
\langle\hat{A}^{''}\rangle_{\ket{\tn{ready}}} = \langle\hat{A}^{''}\rangle_{\ket{\tn{outcome }j}}.
\end{equation}

According to the unitary evolution, a general state of the form
\begin{equation}
\label{eq:general_psi}
\ket{\psi}=\sum_j \psi_j\ket{j}
\end{equation}
should evolve into the superposition
\begin{equation}
\label{eq:outcome_super}
\ket{\psi}\ket{\tn{ready}} = \sum_j \psi_j\ket{j} \ket{\tn{ready}} \stackrel{\tn{U}}{\longrightarrow} \sum_j \psi_j \ket{j} \ket{\tn{outcome }j},
\end{equation}
but according to the standard measurement scheme, this superposition should collapse into a state corresponding to a particular $k$, so
\begin{equation}
\label{eq:outcome_super_proj}
\ket{\psi}\ket{\tn{ready}} = \sum_j \psi_j\ket{j} \ket{\tn{ready}} \stackrel{\tn{Pr}}{\longrightarrow} \ket{k} \ket{\tn{outcome }k}.
\end{equation}

Now let us check the conservation of $A$, by calculating the mean value for Eq. \eqref{eq:outcome_super_proj}, using Eq. \eqref{eq:mean_value_additive}
\begin{equation}
\label{eq:outcome_super_proj_mean}
\langle\hat{A}\rangle_{\ket{\psi}\ket{\tn{ready}}} = \langle\hat{A}^{'}\rangle_{\ket{\psi}} + \langle\hat{A}^{''}\rangle_{\ket{\tn{ready}}} \stackrel{\tn{Pr}}{\longrightarrow} \langle\hat{A}^{'}\rangle_{\ket{k}} + \langle\hat{A}^{''}\rangle_{\ket{\tn{outcome }k}},
\end{equation}
and from Eq. \eqref{eq:mean_value_measurement},
\begin{equation}
\langle\hat{A}^{'}\rangle_{\ket{\psi}} + \langle\hat{A}^{''}\rangle_{\ket{\tn{ready}}} \stackrel{\tn{Pr}}{\longrightarrow} \langle\hat{A}^{'}\rangle_{\ket{k}} + \langle\hat{A}^{''}\rangle_{\ket{\tn{ready}}}.
\end{equation}
Therefore, the quantity $A$ is conserved after the measurement iff
\begin{equation}
\label{eq:condition_conservation_collapse}
\langle\hat{A}^{'}\rangle_{\ket{\psi}} = \langle\hat{A}^{'}\rangle_{\ket{k}}.
\end{equation}
This condition is very restrictive.
The possible values of $\langle\hat{A}^{'}\rangle_{\ket{\psi}}$ can be any value in the minimal interval containing the spectrum of $\hat{A}^{'}$, so they form a continuum. But identity \eqref{eq:condition_conservation_collapse} restricts this continuum to only one value, $\langle\hat{A}^{'}\rangle_{\ket{k}}$, which implies that the initial state $\ket{\psi}$ has to be constrained to a zero-measure subset of its Hilbert space.

In particular, if the measured property is $A'$, \ie $\hat{\mc{O}}=\hat{A}^{'}$, then, from Eq. \eqref{eq:general_psi},
\begin{equation}
\label{eq:additive}
\hat{A}^{'}\ket{\psi}=\sum_j \lambda_j \psi_j \ket{j},
\end{equation}
which implies that the mean value of $\hat{A}^{'}$ in the state $\ket{\psi}$ is
\begin{equation}
\label{eq:mean_value}
\langle\hat{A}^{'}\rangle_{\psi}:=\bra{\psi}\hat{A}^{'}\ket{\psi} = \sum_j \abs{\psi_j}^2 \lambda_j.
\end{equation}
Therefore, if $\hat{\mc{O}}=\hat{A}^{'}$, the identity \eqref{eq:condition_conservation_collapse} holds only if $\ket{\psi}=\ket{k}$, which means no collapse took place.

From these considerations it follows that we proved the following:
\begin{theorem}
\label{thm:collapse_breaks_conservation}
The standard measurement scheme implies that the wavefunction collapse violates conservation laws.
\end{theorem}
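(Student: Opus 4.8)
The plan is to track the expectation value of a generic additive conserved charge $\hat{A}=\hat{A}^{'}\otimes 1^{''}+1^{'}\otimes\hat{A}^{''}$ across the postulated collapse \eqref{eq:outcome_super_proj}, and to show that it can be preserved only on a measure-zero set of initial states $\ket{\psi}$, so that for the overwhelming majority of states the collapse does change $\langle\hat A\rangle$.

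First I would extract, from the unitary part of the standard measurement scheme alone, the single nontrivial constraint that is needed. Applying $\hat{U}$ to each eigenstate branch in \eqref{eq:outcomej}, unitarity preserves $\langle\hat{A}\rangle$; since $\ket{j}\ket{\tn{ready}}$ and $\ket{j}\ket{\tn{outcome }j}$ are both product states and $\hat{A}$ is additive, the expectation splits and the system terms $\langle\hat{A}^{'}\rangle_{\ket{j}}$ cancel, leaving $\langle\hat{A}^{''}\rangle_{\ket{\tn{ready}}}=\langle\hat{A}^{''}\rangle_{\ket{\tn{outcome }j}}$ for every $j$, which is exactly \eqref{eq:mean_value_measurement}. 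In words: because the environment is included, the measurement interaction is genuinely unitary with no external reservoir, and since it returns the observed system to the \emph{same} state $\ket{j}$, the apparatus must end up carrying the same amount of each conserved charge regardless of which outcome it registers.

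Next I would evaluate $\langle\hat{A}\rangle$ on the two sides of \eqref{eq:outcome_super_proj}. Before the measurement it is $\langle\hat{A}^{'}\rangle_{\ket{\psi}}+\langle\hat{A}^{''}\rangle_{\ket{\tn{ready}}}$; after the collapse onto $\ket{k}\ket{\tn{outcome }k}$ it is $\langle\hat{A}^{'}\rangle_{\ket{k}}+\langle\hat{A}^{''}\rangle_{\ket{\tn{outcome }k}}$, which by the previous step equals $\langle\hat{A}^{'}\rangle_{\ket{k}}+\langle\hat{A}^{''}\rangle_{\ket{\tn{ready}}}$. Hence $A$ is conserved across the collapse if and only if $\langle\hat{A}^{'}\rangle_{\ket{\psi}}=\langle\hat{A}^{'}\rangle_{\ket{k}}$, i.e.\ \eqref{eq:condition_conservation_collapse}. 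To finish, I would argue that this equality generically fails: for a fixed apparatus and measured observable $\hat{\mc{O}}$ the right-hand side is one of finitely (or countably) many numbers, one per outcome, whereas $\langle\hat{A}^{'}\rangle_{\ket{\psi}}=\sum_{j,j'}\overline{\psi_j}\psi_{j'}\bra{j}\hat{A}^{'}\ket{j'}$ varies continuously with $\ket{\psi}$ and sweeps out an interval, so the set of $\ket{\psi}$ for which it coincides with the realized outcome's value is of measure zero. The sharpest instance is $\hat{\mc{O}}=\hat{A}^{'}$ itself: then $\langle\hat{A}^{'}\rangle_{\ket{\psi}}=\sum_j\abs{\psi_j}^2\lambda_j$ by \eqref{eq:mean_value}, and this equals the eigenvalue $\lambda_k$ only if $\ket{\psi}$ is already supported on the $\lambda_k$-eigenspace, that is, only if no genuine collapse took place.

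The hard part is the first step together with its hidden hypotheses: one must use that the standard scheme returns the observed system to \emph{exactly} $\ket{j}$ (an overall phase is harmless, but a nontrivial rotation inside a degenerate eigenspace would spoil the cancellation), that ``apparatus'' is understood as apparatus-plus-environment so the combined step is honestly unitary, and that $\hat{A}$ is additive across the system/apparatus cut, the multiplicative case following from the analogous factorization. The only remaining care is to phrase the conclusion correctly: the collapse violates the conservation law for every initial state outside a measure-zero subset of the system's Hilbert space --- equivalently, with probability one for a generic input --- rather than for literally all states, the exceptional set being precisely the states that were already eigenstates of the relevant observable.
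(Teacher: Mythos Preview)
Your proposal is correct and follows essentially the same route as the paper: derive \eqref{eq:mean_value_measurement} from unitarity of the standard scheme on eigenstate branches, compare $\langle\hat A\rangle$ before and after the collapse \eqref{eq:outcome_super_proj} to obtain \eqref{eq:condition_conservation_collapse}, and then argue that this constraint is satisfied only on a measure-zero set, with the special case $\hat{\mc O}=\hat{A}^{'}$ forcing $\ket{\psi}$ into the $\lambda_k$-eigenspace. Your explicit flagging of the hidden hypotheses (system returned to exactly $\ket{j}$, apparatus-plus-environment so the step is genuinely unitary, additivity across the cut) and your phrasing of the conclusion as ``generic'' rather than ``universal'' are refinements the paper leaves implicit.
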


This result was proven in \cite{Sto17UniverseNoCollapse} by using successive spin measurements along different axes, and obtaining an observable violation of the conservation of the angular momentum. The proof given here has some advantages over the one given in \cite{Sto17UniverseNoCollapse}, since it already shows that the conservation laws are broken from a single measurement, and it is more general.

One may object that if we take into account the environment, then the conservation laws are not broken, since the missing difference may be hidden in the environment.
If this is true, then we have to see what assumptions of the proof are wrong.
The collapse in the standard measurement scheme is the only assumption that conflicts with the conservation laws.
If someday a better description of the measurement process will emerge, it should contradict the standard measurement scheme, by requiring that the states of the form $\ket{\tn{outcome }j}$ depend on the initial state $\ket{\psi}$ of the observed system. This requirement contradicts either the projection postulate, or the principle of superposition. The reason is that, if the projection postulate is maintained, then the principle of superposition implies that $\ket{\tn{outcome }j}$ should be exactly the same as if $\ket{\psi}=\ket{j}$, \cf Eq. \eqref{eq:outcomej}.

If the principle of superposition is true, how is it possible that the state $\ket{\tn{outcome }j}$ depends on the initial state $\ket{\psi}$?
In \cite{Sto17UniverseNoCollapse}, instead of using $\ket{\tn{outcome }j}$ to represent the measurement device as a low-level quantum system, it was proposed to use a macro state, which is the equivalence class of all micro states that are not distinguished by the observer. Then, to assume that the micro state, which cannot be perceived macroscopically, is the one that depends on the initial state $\ket{\psi}$.

If the missing difference from the conserved quantity is to be found in the environment, then there must have been transferred there through some interaction. But the interactions are included in the Hamiltonian, which is rendered irrelevant during the collapse. The reasonable way to explain how the interactions made sure that the state $\ket{\tn{outcome }j}$ depends on the initial state $\ket{\psi}$ is if this happened by unitary evolution alone, without collapse. Indeed, this is the path proposed in \cite{Sto17UniverseNoCollapse}, and which is used in this article. 

In addition to the above, in \cite{Sto17UniverseNoCollapse} it was shown that if we impose the spin conservation in the case of the spin measurements of spin $1/2$ particles, there should be no collapse of the spin degrees of freedom of the wavefunction.

This violation of conservation laws due to a discontinuous collapse is true for the measurement scheme in the standard interpretation of Quantum Mechanics \cite{Sto17UniverseNoCollapse}, but also for interpretations which take the wavefunction and its collapse as real, like the GRW interpretation \cite{GRW86,Ghirardi1990RelativisticDynamicalReductionModels}. But, recalling that even in the de Broglie-Bohm interpretation, we have to assign physical properties to the pilot wave rather than to the point-particle, they are affected too (unless we reject their reality \cite{DaumerDurrGoldsteinZanghi1996NaiveRealismAboutOperators,GoldsteinZanghi2013RealityAndRoleOfWavefunction}, which does not solve the problem of the conservation laws). It is sometimes claimed that in the Bohmian interpretation there is no collapse, the only thing that happens instead being that the measurement makes all of the branches of the wavefunction except one simply become ``empty''. Note that this emptiness is different from the update of information that the point-particle is in one of the branches and not the other, because if it was only about this information, then the empty branches would have been already empty before the measurement. But since before the measurement they have physical effects and interfere, and after the measurement they cease to have any effect, it is like they were there and then they disappeared.
Once the measurement is done, the empty branches are effectively collapsed by plugging into the guiding equations the resulting position of the observed particle. And once a branch of the wave is emptied, the mentioned physical properties are no longer attributed to that branch, but only to the one in which the Bohmian point-particle was detected. So Bohmian mechanics simply cannot avoid the wavefunction collapse and the problems resulting from this, including the violation of conservation laws \cite{Sto17UniverseNoCollapse}.
This also affects the Many Worlds Interpretation (MWI) \cite{Eve57,Eve73}. While MWI is unitary and without collapse as long as all branches or worlds are taken into account, collapse is still present within each of the branches. The fact that there is collapse within each branch implies that the result from \cite{Sto17UniverseNoCollapse} applies to each branch, hence conservation laws should be violated. Conservation laws are restored in MWI (and possibly in Bohmian mechanics) when all of the branches are taken into account, because unitary evolution remains valid.

It may be objected that such violations of the conservation laws are too small to be relevant, and that this should not count as a problem for the Copenhagen Interpretation or for other interpretations.
On the other hand, conservation laws are related to the symmetries, which play a central role in the foundations and the principles of the entire physics.
This argument may be seen as subjective, but there is an objective way to sort it out: let the experiment decide. Regardless of the personal taste, a prediction of the violations of the conservation laws have empirical consequences. An experiment is possible at least in principle, although it is hard to imagine how to make it, at least with the current state of technology.

It appears that the only way to satisfy Condition \ref{condition:conservation} is that the time evolution really is unitary, and the collapse is never real, but only apparent. It is indeed possible to have unitary evolution without collapse at the level of a single world, as explained in a series of papers by Stoica \cite{Sto08b,Sto12QMb,Sto12QMc,Sto13bSpringer,Sto16aWavefunctionCollapse,Sto17UniverseNoCollapse}. Note that previously Schulman made a proposal based on special states which, as initial and final conditions, are required to be separable, and evolve unitarily (see \cite{schulman1984definiteMeasurements,schulman1997timeArrowsAndQuantumMeasurement,schulman2012experimentalTestForSpecialState,schulman2016specialStatesDemandForceObserver} and references therein). This is not the position taken here and in Stoica's previous works, and the differences of motivation and implementation of the two approaches are discussed in \cite{Sto17UniverseNoCollapse}. In particular, the main difference consists of using global consistency between local solutions allowed by quantum measurements spread in spacetime, while Schulman's approach is based on the initial and final conditions of separability of states.

Another problem caused by the collapse of the wavefunction is that it leads, in some interpretations, to tensions with Condition \ref{condition:relativity_of_simultaneity}. The collapse theories \cite{GRW86,Ghirardi1990RelativisticDynamicalReductionModels} break this condition. The {\pwt} \cite{db1956double_solution,db1956double_solution_brief,Bohm52,DGZ1996BohmianMechanics}, even if they assume that the wavefunction never collapses, break it too. This seems to be a general problem when you try to have an ontological account of nonlocal effects. In {\pwt}, this may be related to the effective collapse of the conditional wavefunction, which corresponds to the non-empty branch. In Section \sref{s:no_collapse} we will see that the Many Worlds Interpretation breaks it too, although the way it breaks can be reduced significantly. There are only two known ways to avoid breaking it. One of them is to deny that anything happens between measurements is real, as in the Copenhagen Interpretation and its variants \cite{Wheeler1990InformationPhysicsQuantumLinks,Wheeler2000GeonsBHQuantumFoam,Rovelli1996RelationalQM}. Another way is to reject collapse and keep the {\schrod} equation valid only \cite{schulman1997timeArrowsAndQuantumMeasurement,tHooft2016CellularAutomatonInterpretationQM,Sto16aWavefunctionCollapse,Sto17UniverseNoCollapse}, which is the approach taken here.

Another problem with the wavefunction collapse appears when we take into account General Relativity. In this case, when the wavefunction collapses discontinuously, the stress-energy tensor operator $\hat T_{ab}$, and the expectation value $\langle \hat T_{ab} \rangle_{\ket{0}}$ of this operator collapses as well. This leads, in the semi-classical approximation, by using the following version of the Einstein equation
\begin{equation}
\label{eq:einstein_eq}
R_{ab} - \frac 1 2 R g_{ab} = 8\pi G \langle \hat{T}_{ab} \rangle_{\ket{0}},
\end{equation}
to a discontinuous change in the Ricci tensor $R_{ab}$, hence in the spacetime curvature. This means a discontinuity of the covariant derivative $\nabla$. But the covariant derivative is involved in the time evolution equations of all particles. This means that the wavefunction collapse of a single particle should affect the time evolution of all other particles whose wavefunctions propagate in the region where the collapse happened. If this effect would be testable, it could be used to send superluminal signals. So probably it cannot be tested, or no collapse actually happens.

In addition, in \cite{EppleyHannah1977NecessityQuantizeGravitationalField} it is shown that QFT on curved spacetime where equation \eqref{eq:einstein_eq} holds could be used to send signals faster than $c$ in a different way, and that it leads to violations of the uncertainty principle, assuming the wavefunction collapses discontinuously. 

In solving the problem introduced by the wavefunction collapse in the covariant derivative $\nabla$, an alternative to the solution offered by the elimination of discontinuous collapse is to quantize gravity. A quantum gravity theory could solve this problem for example by including a quantization $\hat{g}$ of the metric tensor $g$, along with a quantum version $\hat{\nabla}$ of $\nabla$, so that $\hat{\nabla}$ does not depend on the partial differentials of $\hat{g}$.
But it is perhaps too speculative to defer all of the problems to a future theory of quantum gravity.

All these arguments lead to the question whether it is really necessary that the wavefunction collapses in a discontinuous way, or whether is possible to avoid this and have unitary evolution even during quantum measurements.

%------------------------------------------------------------%
\section{What happens if the wavefunction collapse is unitary?}
\label{s:no_collapse}

The reasons mentioned so far justify us to at least consider seriously the possibility that the time evolution is always unitary. 

The unitary time evolution of quantum systems is not an additional assumption, it follows from the Schr\"odinger equation and its relativistic versions. What I do is not to add a new assumption, but to argue that the assumption that unitary evolution is suspended during measurements and replaced by a discontinuous collapse of the wavefunction is not actually proven by experiments, and it was accepted too quickly. The idea that the state vector projection needs to be discontinuous is in fact the new assumption, a radical one, never proved directly, and, I argue, unnecessary. If we can show that the discontinuous collapse is unnecessary, new possibilities open up.

But unitarity is a strong constraint, and its consequences have to be understood. Consider for example a measurement of the spin of a single particle. If the evolution is always unitary, and no discontinuous collapse happens, it means that the observed particle was already in a state which evolved in the observed eigenstate corresponding to the measured spin. We can account for the interaction between the observed particle and the measurement device, and such an interaction exists indeed, and changes the state of the observed particle. For example, the Stern-Gerlach device measures the spin of neutral atoms having a non-null magnetic moment by using the interaction between the atom and the magnetic field, so that the atom's spin changes. But the change is too small to bring the atom in an eigenstate of the spin along the chosen axis, if the previous spin of the atom was not already in an appropriate state which could evolve into the observed one. If the interaction taking place during the measurement is too large, then the Born rule is violated, in the sense that the spin can even be reversed from $\ket{\uparrow}$ to $\ket{\downarrow}$. Not any interaction counts as a quantum measurement.

The situation gets even trickier if we want to perform multiple non-commuting spin measurements on the same atom. The only way to do this without collapsing the spin in a discontinuous way is that the magnetic interactions with the two Stern-Gerlach devices are fine-tuned so that the atom leaves the first device with deviated spin, and is then deviated more by the interaction with the second device, so that the total deviation changes the spin from an eigenstate of the first spin operator to an eigenstate of the second one. This fine-tuning of course requires that the atoms in the two devices have fine-tuned states, as if they would conspire to give us the right outcomes.

If there is only unitary evolution, with no nonunitary collapse, then the total quantum state, containing the observed particle and the measurement device, should be in very special states, to allow for definite outcomes. This was shown in a theorem in \cite{Sto12QMb}. The theorem is so simple, that it may worth including  here a version of it. 

\begin{theorem}
\label{thm:unitary_ic}
Let $\hilbert_{\psi}$ and $\hilbert_{\mu}$ be the separable Hilbert spaces of the observed system, respectively of the apparatus (in which we consider the environment to be included).
Consider a Hermitian operator $\mc{O}$ on $\hilbert_{\psi}$, which is not a multiple of the identity (\ie has at least two distinct eigenvalues).
Fix a vector $\ket{\mu}\in\hilbert_{\mu}$.
Let $U:\hilbert_{\psi}\otimes\hilbert_{\mu}\to \hilbert_{\psi}\otimes\hilbert_{\mu}$ be a unitary operator (which represent the time evolution starting before and ending after the measurement). Suppose that for each eigenvalue $\lambda$ of $\mc O$ there is a vector $\ket{\psi}\in\hilbert_{\psi}$, an eigenvector $\ket{\psi'}$ of $\mc{O}$ corresponding to $\lambda$, so that
\begin{equation}
\label{eq:unitary_ic}
U\(\ket{\psi}\ket{\mu}\) = \ket{\psi'}\ket{\mu'}.
\end{equation}
Then, only a zero measure set of vectors $\ket{\psi}\in\hilbert_{\psi}$ satisfy \eqref{eq:unitary_ic} for some eigenvector $\ket{\psi'}$ of $\mc{O}$.
\end{theorem}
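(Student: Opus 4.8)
The plan is to show that the set of ``good'' vectors $\ket{\psi}\in\hilbert_{\psi}$ --- those for which $U(\ket{\psi}\ket{\mu})$ is a product state with the first factor an eigenvector of $\mc{O}$ --- is contained in a finite union of proper closed subspaces, hence has measure zero. First I would fix an eigenvalue $\lambda$ of $\mc O$ and let $P_\lambda$ be the orthogonal projection of $\hilbert_\psi$ onto the eigenspace $\hilbert_\psi^\lambda$. The key observation is that the condition ``$U(\ket\psi\ket\mu)=\ket{\psi'}\ket{\mu'}$ with $\ket{\psi'}\in\hilbert_\psi^\lambda$'' is equivalent to saying that the vector $U(\ket\psi\ket\mu)$ lies in the closed subspace $\hilbert_\psi^\lambda\otimes\hilbert_\mu$. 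Since $U$ is unitary and the map $\ket\psi\mapsto\ket\psi\ket\mu$ is a linear isometry of $\hilbert_\psi$ into $\hilbert_\psi\otimes\hilbert_\mu$, the preimage
\begin{equation}
S_\lambda:=\{\ket\psi\in\hilbert_\psi : U(\ket\psi\ket\mu)\in\hilbert_\psi^\lambda\otimes\hilbert_\mu\}
\end{equation}
is itself a closed linear subspace of $\hilbert_\psi$. The set of vectors satisfying \eqref{eq:unitary_ic} for \emph{some} eigenvalue is then contained in $\bigcup_\lambda S_\lambda$, a countable (in fact the relevant part is effectively finite, once we note that only finitely many $\lambda$ can contribute a nonzero subspace --- see below) union of closed subspaces.

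The heart of the argument is then to show each $S_\lambda$ is a \emph{proper} subspace, i.e. $S_\lambda\neq\hilbert_\psi$. Suppose for contradiction that $S_\lambda=\hilbert_\psi$ for some $\lambda$. Then for \emph{every} $\ket\psi$ we would have $U(\ket\psi\ket\mu)\in\hilbert_\psi^\lambda\otimes\hilbert_\mu$, so $U$ maps the entire subspace $\hilbert_\psi\otimes\ket\mu$ into $\hilbert_\psi^\lambda\otimes\hilbert_\mu$. But by the hypothesis of the theorem, for a \emph{different} eigenvalue $\lambda'\neq\lambda$ there is a vector $\ket{\psi}$ with $U(\ket\psi\ket\mu)=\ket{\psi'}\ket{\mu'}$ where $\ket{\psi'}$ is an eigenvector for $\lambda'$, hence $\ket{\psi'}\perp\hilbert_\psi^\lambda$ and so $U(\ket\psi\ket\mu)\notin\hilbert_\psi^\lambda\otimes\hilbert_\mu$ --- a contradiction. (Here I use that $\mc O$ has at least two distinct eigenvalues, which is exactly the ``not a multiple of the identity'' hypothesis; this is where that assumption is essential.) The same reasoning simultaneously shows that at most one $\lambda$ could possibly have $S_\lambda$ be large, and in fact each $S_\lambda$ is proper, so we are done after the measure-theoretic step below.

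Finally, I would invoke the standard fact that a proper closed linear subspace of a separable Hilbert space has measure zero with respect to any reasonable (Gaussian, or the natural uniform measure on the unit sphere) measure --- the point being that a proper closed subspace is nowhere dense and, being annihilated by some nonzero functional, is a null set; a countable union of null sets is null. Hence $\bigcup_\lambda S_\lambda$ is a zero-measure subset of $\hilbert_\psi$, which is the claim. \emph{The main obstacle} I anticipate is being careful about what ``zero measure'' means on an infinite-dimensional Hilbert space --- there is no translation-invariant measure, so one must either restrict to the unit sphere with its rotationally invariant measure, or work with a Gaussian measure, or simply interpret the statement as ``the good set is contained in a countable union of proper closed subspaces'' (which is the genuinely robust formulation and is what the physics requires). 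A secondary subtlety is making sure $S_\lambda$ really is closed and linear: linearity is immediate from linearity of $U$ and of $\ket\psi\mapsto\ket\psi\ket\mu$, and closedness follows because $\hilbert_\psi^\lambda\otimes\hilbert_\mu$ is closed and $U$ together with the isometric embedding is continuous --- so this step is routine once stated carefully.
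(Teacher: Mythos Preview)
Your argument is correct and in fact tidier than the paper's. The paper proceeds concretely: it fixes two vectors $\ket{\psi_1},\ket{\psi_2}$ that evolve under $U$ to product states with eigenvector first factors $\ket{\psi_1'},\ket{\psi_2'}$ for \emph{distinct} eigenvalues, and then shows directly that $\alpha_1\ket{\psi_1'}\ket{\mu_1'}+\alpha_2\ket{\psi_2'}\ket{\mu_2'}$ cannot be a product state with eigenvector first factor when $\alpha_1,\alpha_2\neq 0$ (orthogonality of $\ket{\psi_1'},\ket{\psi_2'}$ forces $\ket{\mu_1'}\parallel\ket{\mu_2'}$, whence the first factor would be a nontrivial combination of eigenvectors with different eigenvalues). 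From this it concludes ``measure zero.'' Your route is more structural: you introduce the closed linear subspaces $S_\lambda=\{\ket\psi:U(\ket\psi\ket\mu)\in\hilbert_\psi^\lambda\otimes\hilbert_\mu\}$, use the same two-eigenvalue hypothesis to show each $S_\lambda$ is proper, and then invoke that a countable union of proper closed subspaces is null. The core mechanism is identical --- the existence of a vector mapping to a different eigenspace obstructs any single $S_\lambda$ from being everything --- but your organization makes the passage from the local obstruction to the global measure-zero statement transparent, whereas the paper's argument, read literally, only excludes nontrivial combinations inside one particular two-dimensional plane and leaves the jump to the full Hilbert space implicit. Your caveat about the meaning of ``measure zero'' in infinite dimensions is well taken and not addressed in the paper. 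One small correction: your parenthetical that ``only finitely many $\lambda$ can contribute'' is unnecessary and not obviously true --- countably many suffices, and that is automatic since eigenvectors for distinct eigenvalues of a Hermitian operator on a separable space are orthogonal.
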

\begin{proof}
Let $\ket{\psi_1'}$ and  $\ket{\psi_2'}$ be two eigenvectors of $\mc{O}$ in $\hilbert_{\psi}$ corresponding to distinct eigenvalues, so that there are two vectors $\ket{\psi_1},\ket{\psi_2}\in\hilbert_{\psi}$ satisfying
\begin{equation}
U\(\ket{\psi_1}\ket{\mu}\) = \ket{\psi_1'}\ket{\mu_1'}
\end{equation}
and
\begin{equation}
U\(\ket{\psi_2}\ket{\mu}\) = \ket{\psi_2'}\ket{\mu_2'}.
\end{equation}

% Because $\ket{\psi_1'}$ and  $\ket{\psi_2'}$ are orthogonal, $\ket{\mu_1'}\ket{\psi_1'}$ and $\ket{\mu_2'}\ket{\psi_2'}$ are orthogonal too.
% Since $U$ is unitary, it follows that $U^{-1}\(\ket{\mu_1'}\ket{\psi_1'}\)$ and $U^{-1}\(\ket{\mu_2'}\ket{\psi_2'}\)$ are orthogonal too, and since they are just $\ket{\mu}\ket{\psi_1}$ and $\ket{\mu}\ket{\psi_2}$, it follows that $\ket{\psi_1}$ and $\ket{\psi_2}$ are orthogonal as well. 
Then,
\begin{equation}
U\(\(\alpha_1\ket{\psi_1}+\alpha_2\ket{\psi_2}\)\ket{\mu}\) = \alpha_1\ket{\psi_1'}\ket{\mu_1'} + \alpha_2\ket{\psi_2'}\ket{\mu_2'}
\end{equation}
for any two complex numbers $\alpha_1$ and  $\alpha_2$.
Suppose that
\begin{equation}
\alpha_1\ket{\psi_1'}\ket{\mu_1'} + \alpha_2\ket{\psi_2'}\ket{\mu_2'} = \ket{\psi''}\ket{\mu''}
\end{equation}
for some vector $\ket{\mu''}\in\hilbert_{\mu}$ and some eigenvector $\ket{\psi''}\in\hilbert_{\psi}$ of $\mc{O}$.
Since $\psi_1'$ and $\psi_2'$ are orthogonal, this is possible only if there is a number $\beta\in\C$ so that $\ket{\mu_2'}=\beta\ket{\mu_1'}$, so that
\begin{equation}
\(\alpha_1\ket{\psi_1'} + \alpha_2\beta\ket{\psi_2'}\)\ket{\mu_1'} = \ket{\psi''}\ket{\mu''}.
\end{equation}
Then, $\alpha_1\ket{\psi_1'} + \alpha_2\beta\ket{\psi_2'}$ is an eigenvector, for any $\alpha_1,\alpha_2\in\C$.
But this is only possible if $\ket{\psi_1'}$ and $\ket{\psi_2'}$ are eigenvectors of the same eigenvalue, leading to a contradiction.
Therefore, no linear combination $\alpha_1\ket{\psi_1}+\alpha_2\ket{\psi_2}$ can evolve into an eigenvector of $\mc{O}$, for all $\alpha_1\neq 0$ and $\alpha_2\neq 0$.
It follows that only a subset of measure zero from all possible initial states $\ket{\psi}$ can satisfy Eq. \eqref{eq:unitary_ic}.
\end{proof}

Theorem \ref{thm:unitary_ic} shows that the measure of the initial states which can result in definite outcomes of the measurements rather than Schr\"odinger's cat type of superpositions is zero compared to the measure of the entire Hilbert space. This result assumes sharp measurements, but if the measurements are not exactly sharp due to limitations coming from the \emph{Wigner-Araki-Yanase theorem} \cite{wigner1952MessungQMOperatoren,Wigner1952MessungQMOperatorenPBusch2010EnTranslation,ArakiYanase1960MeasurementofQMOperators}, still only a small subset of the Hilbert space can result in definite outcomes. In other words, the initial state of the observed particle has to be perfectly synchronized with that of the measurement device, even though they are separated initially by a spacelike interval. There is no way to escape the apparent conspiracy between the observed particle and the measurement device, even if they come from causally separated regions of spacetime.

Such fine-tuning of the initial state is usually interpreted in terms of retrocausality. Retrocausal models were already suggested by de Beauregard \cite{deBeauregard1977TimeSymmetryEinsteinParadox}, and after that by Rietdijk \cite{Rietdijk1978retroactiveInfluence}. Another approach, which provides a mechanism taking place in ``pseudotime'' is Cramer's \emph{Transactional Interpretation} \cite{cramer1986transactional,cramer1988overview,Kastner2012TransactionalBook}. Different models and approaches are proposed and discussed in \cite{SEP-2019qm-retrocausality,Wharton2007TimeSymmetricQM,price2008toyRetrocausality,sutherland2008causallySymmetricBohm,Argaman2008Retrocausality,price2015disentangling,sutherland2017RetrocausalityHelps,EmilyAdlam2018SpookyActionTemporalDistance,CohenCortesElitzurSmolin2019RealismAndCausalityI,WhartonArgaman2019BellThereomAndSpacetimeBasedQM}. An interesting and important proposal, based on evolution in both directions of time, is the \emph{two-state vector formalism} \cite{aharonov1964time,aharonov1991complete,cohen2016quantum2classical}.

A way to understand retrocausality is by appealing to Wheeler's delayed choice experiment \cite{Whe78}. In Wheeler's experiment, the setup is such that we can choose between making a which-path measurement or an interference measurement, after the moment when the observed photon either took both ways or randomly only one. So it looks like the photon's ``choice'' between the both-ways and the which-path possibilities is affected retrocausally by our choice of what experiment to perform. Of course, this experiment can be interpreted in terms of wavefunction collapse, but it is very eloquent in suggesting that a retrocausal effect takes place, which seems in this case a more natural explanation.

Later I will give an account of this apparent retrocausality, based on the block universe, which makes it less weird. But for the moment, let us face some more implications of this strange possibility.

An important feature of retrocausal approaches, including the purely unitary ones, is that they can recover both relativistic invariance and locality (or, we can interpret them as recovering locality in space, at the expense of ``locality in time''). They are consistent with relativistic locality, and what they violate is \emph{statistical independence}.
As we remember from Bell's theorem, there are two conditions leading to Bell's inequality \cite{Bell64BellTheorem,Bell2004SpeakableUnspeakable}. The first condition is that of locality, and the second one is that of statistical independence. Statistical independence means that the initial state of the observed pair of particles is independent from the experimental setup. From these conditions, Bell's inequality follows. But since the experimental evidence \cite{AGR82,Aspect99BellInequality} showed repeatedly that Bell's inequality is violated, it means that at least one of the conditions of the theorem is violated, but we cannot say for sure which one.
Retrocausal approaches violate statistical independence, and save locality.
There is another implicit assumption in Bell's theorem, that the outcomes are determinate. If we assume that all outcomes happen, we obtain the Many Worlds Interpretation. I will come back to this later, because it provides a nice way out without having to choose between locality and statistical independence, and it also brings us a step closer to the unitary approach presented in this article. For the moment, we need to say more about retrocausality and locality.

Most physicists and philosophers of physics find it more acceptable to give up locality, and to maintain statistical independence. The reasons are obvious, violations of statistical independence seem to violate causality.

But a closer look considering relativistic causality in Minkowski spacetime shows that both options have similar problems. Nonlocality, coined by Einstein as ``spooky action at a distance'', is at odds with relativistic causality. But this nonlocality does not allow us to send signals or energy outside the light cone by quantum measurements. This is considered to be consistent with relativity, of course, at an operational or epistemic level, not at the level of ontology. At the epistemic or operational level there is a ``peaceful coexistence'' between QM and SR, but if we want to avoid being instrumentalists about QM, we should not do so by becoming  instead instrumentalists about SR, so I would argue that no-signaling is not enough, locality should be obtained at the ontological level.

But the same can be said about retrocausality. It does not violate causality, because it cannot be used to send information or energy back in time. It cannot be used to change the outcomes of already performed measurements. The experiments of quantum time travel \cite{SLloyd2011QuantumTimeTravel} can be done, but there is no such observable violation of causality, just like in the quantum teleportation experiment no such observed violation of Einstein causality occurs \cite{bennett1993teleporting,vaidman1994teleportation,bouwmeester1997ExperimentalQuantumTeleportation}. 
But if we take seriously Einstein's causality on Minkowski spacetime, and especially on curved spacetime, the option of rejecting locality is arguably more problematic than rejecting statistical independence. Both Special and General Relativity have well-defined ontologies, which are local. On a curved background, the fields whose stress-energy tensor corresponds to a spacetime curvature via Einstein's equation should have well-defined local ontologies too. So it is irrelevant from this point of view that we can claim that violations of causality happen at the ontological level, but they are not observable. If we seriously adhere to the goal of providing an ontology for Quantum Mechanics, we have to take the ontology seriously all the time, and not conveniently ignore this ontology on the grounds that no quantum measurement can show that it violates Einstein's causality. But the alternative way, of an ontology based on violations of statistical independence which does not violate locality, is not at all inconsistent at the ontological level with Einstein's causality, especially in the block universe view.
Retrocausal models have an advantage -- ontologically, they are perfectly consistent with SR and GR, there is no need to invoke instrumentalist arguments to recover this consistency. The only reason to invoke no-signaling backwards in time is to recover causality, but this kind of causality is not necessary for SR or GR at a fundamental level, it is only needed at the coarse grained level.
Indeed, both SR and GR are perfectly time symmetric theories, and they cannot distinguish past from future lightcones, so the covariant equations can be solved in both directions of time, and they allow influences in both directions of time. Not only are they consistent with bicausal interpretations, but they even endorse them. A direction of time is preferred because of the Second Law of Thermodynamics, but this preference is only statistical, and for the other laws retrocausal influences can always be reinterpreted as causal influences with special initial conditions.

Therefore, retrocausal interpretations provide a way to make QM consistent with Conditions \ref{condition:relativity_of_simultaneity}, \ref{condition:weak_causality}, and \ref{condition:free_will}, but they appear to be in conflict with Condition \ref{condition:strong_causality}.
The retrocausal models consistent with Condition \ref{condition:free_will} are those in which the evolution is not deterministic, \eg the Transactional Interpretation. For example, the EPR-B experiment is understood in the following way. The past is determined up to the moment when the singlet state decays. Then, the way it decays is determined completely only when Alice and Bob finish their measurements. 
So, in this case, retrocausality goes back only to the moment of decay.
Then, it is possible to explain the results as the particles not being actually in entanglement, but rather as if the singlet state decayed into two separate states, as if it already collapsed before the two particles went in separate places \cite{deBeauregard1977TimeSymmetryEinsteinParadox,Rietdijk1978retroactiveInfluence,price2015disentangling}. This interpretation of the EPR-B experiment seems to be confirmed through weak measurements \cite{aharonov2012future-past} in the exact way that Bohmian trajectories are considered to be confirmed \cite{Kocsis2011ObservingTrajectories}. Hence, we can consider the processes taking place during the EPR-B experiment as being local in the sense that the particles are described by local solutions of the Schr\"odinger equation. This kind of spacetime locality is not what we usually expect when we speak about locality, because it depends on the final conditions imposed by the experimental setup. The solutions are local in the sense that they obey partial differential equations on spacetime, but they are also subject to boundary conditions which are global and impose the apparent (spatial) nonlocality like that from Bell's theorem.

A retrocausal interpretation may also be deterministic, but sometimes this case is called superdeterminism (\eg \cite{tHooft2016CellularAutomatonInterpretationQM}). There are two ways to interpret superdeterminism. One of them makes it look worse than it is, since it assumes that the past state is completely determined, and the future experimental setups that would contradict the past are not allowed, violating thus Condition \ref{condition:free_will}.

The present proposal, being unitary, is deterministic, and because of Theorem \ref{thm:unitary_ic}, it must be of the same kind as superdeterministic models.
But we will see in Section \sref{s:global_consistency} that this is not as bad as expected, since the block universe allows for another way to interpret the model, which is consistent with both Conditions \ref{condition:strong_causality} and \ref{condition:free_will}.

Moreover, unitarity indeed avoids the problems of nonlocality, but we have first to make sure that the unitary evolution ontology is local. As I explained, the wavefunction is defined on the configuration space, rather than on the physical space. This means that the wavefunction is \emph{holistic}. But even so, it admits a representation in terms of fields on spacetime, which propagate and interact locally, as long as the evolution is unitary \cite{Sto2019RepresentationOfWavefunctionOn3D}. Being holistic means that it allows entanglement, but since there is no wavefunction collapse, this entanglement is never projected to separable states. If this would have happened, then of course violations of locality would occur, because such a projection would have to be nonlocal.

To understand how unitary evolution is local, let us first consider a wavefunction which is a separable state in a basis of eigenstates of an operator which commutes with the Hamiltonian. Since the relativistic evolution equation of such a state does not contain nonlocal interactions, this means that the state will evolve in a local manner, even though it is a multiparticle state. A general state is a superposition of such states, and each term of the superposition evolves locally. And since we cannot project any of them out, because we assume that all evolution is unitary and no discontinuous collapse occurs, this means that locality is ensured. Regardless of what other meanings we assign to locality, this type of locality is consistent with Einstein's locality and causality. More about how the wavefunction can be represented in a locally separable way on the three-dimensional space, and how its unitary evolution is local, can be found in full mathematical detail in \cite{Sto2019RepresentationOfWavefunctionOn3D}.

Since I mentioned Einstein's locality and causality, it would be interesting which of the conditions from the 1935 EPR paper \cite{EPR35} is violated by this proposal. It is not difficult to see that Einstein's criterion of reality is violated, but not in a ``lethal'' way. Indeed, our ontology does not satisfy Einstein's realism, because the state before the measurement cannot be just any state, it depends on the experimental setup. This is another way to say that it violates statistical independence. Simply put, you cannot have any initial state for the observed particle and at the same time any initial quantum (microscopic) state of the apparatus. But this is not ``lethal'', since it is perfectly consistent with relativity and Einstein's causality and locality.
In addition, as argued in \cite{Sto17UniverseNoCollapse}, the macroscopic state of the apparatus and the quantum state of the observed system can be statistically independent.

Moreover, the avoidance of a discontinuous collapse also avoids the problems identified in Section \sref{s:collapse_problems}, while a nonlocal ontology, at least in the currently known forms, cannot do this, because there is no way to take the wavefunction as becoming empty of physical properties, and at the same time to say that this is not the same as collapse, as explained in Section \sref{s:reality_beyond_wavefunction}.

But, even if we still find retrocausality preposterous, as I already alluded, there is a way to tame it, which is based on the fact that there is another hidden assumption in Bell's theorem, which requires that the outcomes are determinate. The violation of this assumption makes possible the Many Worlds Interpretation, which allows all outcomes to be obtained. Alternatively, one can also interpret MWI at the branch level, where the outcomes are determinate, as violating one of the two main assumptions of Bell's theorem. At first sight, it may seem that branching happens through some nonlocal collapse accompanying the measurement, and then in each branch locality is violated. However, the other option makes more sense. Consider the EPR-B experiment, and start from the outcomes obtained by Alice and Bob. When the two particles are detected, they no longer form an entangled state, and they no longer interact. Therefore, if we evolve the solution unitarily backwards in time, the two particles turn out to be separable and to not interact for the entire time interval between the decay that produced them and the measurement, so no collapse is required to occur except when the spin $0$ particle decayed into two spin $1/2$ particles. So it makes sense to consider that the collapse accompanying the branching took place, in a very localized manner, when the decay occurred, and the particles evolved unitarily until their detection. This makes some proponents say that MWI is local \cite{Bacciagaluppi2002LocalityInMWI,Deutsch2011VindicationOfQuantumLocality,SEP-Vaidman2002MWI,Vaidman2016AllIsPsi}. A salient feature of this local interpretation of MWI is that, although in each branch retrocausality seems to happen, this does not really happen in the universal wavefunction, so overall, there is no violation of either statistical independence or locality, because all possible outcomes occur.

We have seen that it makes more sense that, in MWI, the split happens during decays and when systems are detected. While each branch taken independently manifests, by this, retrocausal features, on the overall there is no such thing. In the example with the EPR-B experiment, the branching seems to happen only during the decay into two spin $1/2$ particles, and this can be understood by pushing unitary evolution back in time as much as it holds without contradicting the observations. But is it necessary to stop this at the decay? As it is understood from the Weisskopf-Wigner model of decay \cite{WeisskopfWigner1930}, while an isolated atom or composite particle should be stable, even if it is excited, vacuum fluctuations make it unstable, and this can be described as a superposition between the excited state and the decayed state (including the products of the decay). Then, the decay happens as a result of the amplitude damping of the excited state, so eventually the system decays. Observations of the system may introduce a collapse, which either makes the system decay, or makes it continue to remain in the excited state. But in the case of the singlet state used in the EPR-B experiment, there are different ways to decay, and the vacuum also is in an undetermined state. So it is possible that, when evolving backwards in time the two particles, one obtains a history in which the vacuum was in the right state required to provoke the decay of the singlet state exactly as needed for the subsequent measurements made by Alice and Bob to find the two particles in the right states. It is, at least in principle, possible that there is no actual branching at the fundamental level, but at the coarse-grained macroscopic level it would appear as if there is a collapse and that there is randomness. In this case, the MWI would be modified into a theory in which the worlds, rather than splitting, are really parallel and evolve unitarily, but they are not always distinguishable at the coarse grained level. In addition, it is known that MWI has a problem with probabilities. While even since Everett's seminal work \cite{Eve57,Eve73} it was proposed to attribute probabilistic meaning to the amplitudes, this does not seem very appropriate, because the wavefunction is ontic. It would be perhaps more convincing to have a derivation of probabilities in terms of ensembles of microstates. But in a theory of truly parallel worlds, which only seem identical up to some point when measurements happen, and then they become distinguishable, probabilities would occur naturally in terms of ensembles of microstates. So, if we take the idea of parallel worlds which always evolve unitarily, we recover the conservation laws for each of the worlds, and also make room for probabilities. From all these arguments, it seems that if we push MWI to its consequences, each branch should in fact be independent and evolve unitarily for its entire existence. It remains an open problem how to derive the correct probabilities according to the Born rule in such a theory.

%------------------------------------------------------------%
\section{Global consistency condition}
\label{s:global_consistency}

We have seen that, if the unitary evolution is not broken even during the measurements, then the initial states including both the observed system and the measurement apparatus (and the environment), have to be severely constrained, otherwise the outcome of the measurement will be undefined \cite{Sto12QMb}. This looks like fine-tuning, retrocausality, or superdeterminism. We have seen in section \sref{s:no_collapse} that MWI provides a nice way to tame retrocausality. But here I will discuss another possible explanation, which works for a single world, and makes sense in the block universe picture, and only looks like fine-tuning or retrocausality for the observers experiencing the flow of time.

For this, we need to take into account two different but related dynamical systems
\begin{enumerate}
	\item 
	The micro quantum dynamical system, consisting of the unitary evolution in a Hilbert space, governed by the {\schrod} equation.
	\item 
	The macro-quasiclassical dynamical system, which is the coarse graining of the low-level quantum dynamical system.
\end{enumerate}

Here is a definition of a dynamical system, which works for both deterministic and nondeterministic systems.
\begin{definition}
\label{def:dynsys}
A \emph{dynamical system} $\mc{A}=(S,T,\tau)$ consists of 
\begin{enumerate}
	\item a \emph{state space} $S$,
	\item a one-dimensional group or monoid $T$, which can be the real numbers $\R$ or the integers $\Z$, or the non-negative reals $\R^+$ or integers $\N$, and represents \emph{time},
	\item an \emph{evolution law} $\tau:T\times S\to\ms{P}(S)$, where $\ms{P}(S)$ is the set of all subsets of $S$, and $\tau$ satisfies the conditions $\tau(t_2,\tau(t_1,s))=\tau(t_1+t_2,s)$, $\tau(0,s)=\{s\}$, for all $t_1,t_2\in T$ and $s\in S$.
\end{enumerate}
\end{definition}

For a deterministic dynamical system, the evolution law maps states into states, but in Def. \ref{def:dynsys} it maps states into sets of states, in order to be able to describe nondeterministic systems as well. In the deterministic case, the mapping is always to sets consisting of only one state.

An example of dynamical system of $\n$ particles is any classical mechanical system
\begin{equation}
\label{eq:dynsys_classical}
\mc{C}=(S_\mc{C}=\R^{6\n},T_\mc{C}=\R,\tau_\mc{C}),
\end{equation}
where the state space is the phase space $S_\mc{C}=\R^{6\n}$ consisting of the positions and the momenta of the $\n$ particles, the time is $T_\mc{C}=\R$, and the evolution law $\tau_\mc{C}$ is given by Hamilton's equations. In statistical mechanics, one uses a \emph{coarse graining} of the classical phase space, which is a partition of the classical phase space into disjoint subsets. It is assumed that there is an equivalence relation between the points in the phase space representing classical states, defined for any $s_1,s_2\in S$ by $s_1\cong s_2$ iff $s_1$ and $s_2$ cannot be distinguished at the macro level. This relation is in fact not precisely defined, but it works very well to reduce thermodynamics to classical statistical mechanics. We will denote the macro classical dynamical system by
\begin{equation}
\label{eq:dynsys_classical_macro}
\mc{M}=(S_\mc{M},T_\mc{M}=\R,\tau_\mc{M}).
\end{equation}

For a quantum dynamical system, we will use here the nonrelativistic quantization, which consists of starting with a classical dynamical system $\mc{C}=(S_\mc{C},T_\mc{C},\tau_\mc{C})$ of $\n$ particles, and replacing the states with square integrable complex functions on the configuration space $\R^{3\n}$. The {\schrod} equation is obtained by quantization from the classical evolution law.
The resulting state space is a Hilbert space $\hilbert$, the time is $T=\R$, and the evolution law is the unitary evolution defined by the obtained {\schrod} equation as in Eq. \eqref{eq:unitary_evolution},
\begin{equation}
\label{eq:dynsys_quantum}
\mc{Q}=(\hilbert,T_{\mc{Q}}=\R,\tau_\mc{Q}=\hat{U}).
\end{equation}

We will also denote the macro-quasiclassical dynamical system by $\mc{M}=(S_\mc{M},T_\mc{M},\tau_\mc{M})$.

To better see the relation between the classical dynamical system of $\n$ particles $\mc{C}$ from Eq. \eqref{eq:dynsys_classical} and the quantum dynamical system $\mc{Q}$ from Eq. \eqref{eq:dynsys_quantum}, it is useful to remember the \emph{phase space formulation of QM}, introduced in \cite{Weyl1927QuantenmechanikUndGruppentheorie,Wigner1932PhaseSpaceQM} and developed in \cite{Groenewold1946PhaseSpaceQM,Moyal1949PhaseSpaceQM}.

The phase space formulation is equivalent to the standard formulation, because the Wigner-Moyal transform has the same properties as the tensor product. A key difference in connecting the macro level with the quantum level is that, in the phase space formulation, the relation between with classical coarse grained system $\mc{M}$ is more visible.
As such, all observations consist of observations of the macro states. Quantum measurements are arrangements of the macro states that allow us to infer information about the quantum state from observations of the macro states. What we learn is in what cell of the coarse graining is the phase-space function localized. In the standard formulation based on configuration space, what we learn is in what subspace of the Hilbert space is the state vector contained.

Consider now a collection of all observations made so far, whether they consist of free observations of the world, of classical, or of quantum measurements. Recall that what is observed are just the macro states. In fact, in practice we observe subsystems of the dynamical system $\mc{M}$, which allow us to restrain the set of all possible macro states from $S_\mc{M}$, but not to determine them completely, in the absence of information about the entire system. But let us assume for simplicity that we can observe the macro states. Our observations are a collection of macro states realized at different times $\ldots, t_{-1},t_0,t_1,\ldots$. These observations allow us to infer more and more about the micro, quantum state. In other words, we start with a set of possible quantum states, and with each observation at a given time $t_j$, we reduce the set of possible solutions more and more, as we refine our observations and we learn more about the state of the universe. In terms of phase space, at each time $t_j$ we observed a coarse graining cell in the phase space. In terms of the Hilbert space, at each time $t_j$ we observed a subspace of the Hilbert space. Then, finding out the quantum state based on these observations, and by taking into account its dynamics, consist of something similar to fitting a curve. What we do is to fit the trajectory of the quantum system in the Hilbert space, given some constraints at various times $\ldots, t_{-1},t_0,t_1,\ldots$.

We have the following question. Is it always possible to fit the constraints without breaking the unitary evolution? If we focus on the observed system only, it seems that we need to introduce the collapse of the wavefunction, at least for the cases when it is measured using non-commuting operators. But given the discussion we had so far, it is possible to have solutions in which the observed system, but also the micro level of the apparatus, are in such a state that the interaction at the micro level takes place in the right way to account for our macro level observations. In \cite{Sto17UniverseNoCollapse}, a thought experiment was proposed, to conclude that there are always such solutions that fit all of the macro constraints by unitary evolution alone. Also see the discussion about the abundance of the ``special states'' in \cite{schulman1997timeArrowsAndQuantumMeasurement}, where it was argued that there are always such solutions, in fact even more restricted, special quantum states.

An omniscient mind that would know in complete detail the micro quantum level $\mc{Q}$ would be able to predict all of the measurements and their outcomes, by applying the unitary evolution. For such a mind, the world would appear superdeterministic, and it would satisfy Condition \ref{condition:strong_causality}. On the other hand, for being like us, even as embedded in the very system we observe, everything happens at the macro, coarse grained level $\mc{M}$, and it satisfies Conditions \ref{condition:weak_causality} and \ref{condition:free_will}. So there will be no violation of causality or of free will if we accept a top-down retro-causation from the macro level to the micro level. But there is another perspective, the timeless perspective of the block universe. This has to be post-determined, in the sense of fitting the unitary evolution curve after all of the macro constraints are collected.

To arrive to that perspective, we need to acknowledge that, while the dynamical system description is complete, it hides something, or rather it makes it implicit. What is not obvious in the dynamical system picture is the spread of constraints in spacetime. Maybe another formalism can complement the dynamical system description, by taking into account the fact that various observations happen in different places in space, in a local manner.

Such a description is provided by \emph{sheaf theory} \cite{MacLaneMoerdijk92,Bredon1997SheafTheory}. For a sheaf theoretical framework for physics and foundational problems also see \cite{Sto08WorldTheory}.
Sheaf theory has many applications, but the one in which we are interested is how local solutions of partial differential equations (PDE) extend globally. When a PDE has well-defined initial conditions which ensure local solutions, this is not necessarily enough to ensure global solutions.

A \emph{sheaf} is a collection of objects defined locally on open subsets of the base space, which satisfy certain conditions that allow them to be ``glued'' together into solutions defined on larger sets in a nice way. For example if $U$ and $V$ are two open sets so that $U\cap V\neq\emptyset$, and if there are two fields $u$ and $v$ defined locally on $U$, respectively $V$, so that the restrictions $u|_{U\cap V}=v|_{U\cap V}$, then the fields $u$ and $v$ can be glued to a field $w$ defined on the union $U\cup V$. In addition, if $u$ is a local field on $U$, and $U'\subset U$, then the local solution $u_{U}'$ also belongs to the sheaf. There are additional conditions to be satisfied, but they are automatically satisfied in the cases of interest to us, which come from PDE.

In the particular case of local solutions to PDE, which is of interest for us, the conditions defining a sheaf are automatically satisfied. When two local solutions have disjoint domains, or are equal on the common domain on which they are defined, they can be ``glued'' and extended to the union of their individual domains. 

But this extension of local solutions does not always work, because global extensions are not always guaranteed to exist. Even if the domains of the local solutions do not overlap, it is not always possible to have a global solution extending them. By this, sheaf theory shows us that, when the information we have about the wavefunction is spread in multiple locations, not all local conditions can be mutually consistent, so there are certain constraints, and correlations are enforced between local solutions. This is relevant for our discussion, because it suggests a way to limit the total Hilbert space in Theorem \ref{thm:unitary_ic} to a subset of solutions which lead to definite outcomes by unitary evolution without collapse.

Recall from Section \sref{s:wavefunction_on_spacetime} that the wavefunction can be represented as a vector field on the $3$D-space, according to \cite{Sto2019RepresentationOfWavefunctionOn3D}. Moreover, the representation in \cite{Sto2019RepresentationOfWavefunctionOn3D} satisfies local separability, which means that the vector fields representing the wavefunction can be defined locally, and they indeed form a sheaf.
Two things are important here: (1) that the local solutions form a sheaf, and (2) that the base space is the physical $3$D-space, which allows the connection with measurement devices, which are macroscopic systems, so they are objects on the physical $3$D-space.

For example, if a particle is found at a certain location, its amplitude at other locations should vanish. Similarly, in the EPR-B experiment, the local solutions found by Alice and Bob have to be mutually consistent, even if they are imposed at different locations. This is where sheaf theory becomes useful.
We know that not all combinations of outcomes that Alice and Bob can obtain are consistent, and those that are consistent have different probabilities to occur. Global extensions of sheaves have similar properties.

\emph{Sheaf cohomology} studies the obstructions which prevent the extension of local solutions to global ones. These obstructions are usually of topological nature. The space of initial or boundary conditions which admit global extensions is reduced, in the presence of such obstructions, to lower dimensional spaces. 
A simple example is that of \emph{complex holomorphic functions}. They are complex functions satisfying the \emph{Cauchy-Riemann} condition, which is equivalent to saying that such functions depend on $z\in\C$ but not on its conjugate. On the complex plane $\C$ they are spanned by the powers $n\geq 0$ of $z\in\C$, but the solutions do not always extend globally through analytic extension, they can run into singularities. Those with global extensions on $\C$ without singularities are called holomorphic, and their space is smaller than the space of functions spanned by the powers of $z$, yet still infinite-dimensional (the polynomials in $z$ form an infinite-dimensional subspace of the space of holomorphic functions on $\C$). But on the Riemann sphere $\C\cup{\infty}$, the only holomorphic functions are the constant ones, so their space is one-dimensional. Changing the topology by adding a single point introduces incredible constraints.
Such examples suggest an interesting possibility: what if there are constrains, perhaps topological in nature, at the level of the spin and gauge bundles and especially interaction bundles used to describe particles and forces, which ensure that only certain solutions are possible? Could the global solutions correspond to the definite outcomes of measurements, and explain the correlations between outcomes obtained when measuring entangled systems? From the mathematical examples we have in sheaf theory, this appears to be a plausible possibility, and it makes sense to explore it as a candidate solution of the measurement problem and the problem of the apparent classicality at the macroscopic level. But it is difficult to know for sure, until we will have a better understanding of the fiber bundle structures, and of what kind of constraints they impose to the global existence of solutions.

One may think that the example of complex holomorphic functions is irrelevant to Quantum Mechanics for two reasons. First, what is the relevance of the Cauchy-Riemann condition? Well, the Cauchy-Riemann operator is for two real dimensions what the Dirac operator is for Minkowski spacetime, as it is known from the theory of Clifford algebras \cite{chevalley1997algebraicspinors,crumeyrolle1990clifford}. Moreover, the same operator can be used to express the Maxwell equations in a more compact way as a single equation \cite{hes:1966}, and this also works for the Yang-Mills equation.

A second objection could be that Minkowski spacetime has a trivial topology, and even the curved spacetime probably has, if not a trivial topology, maybe a simple one. And this is true, but on the other hand in reality both the existence of spin $1/2$ particles, and of the internal gauge degrees of freedom, require the existence of fiber bundles, and they have a complicated topology.
Unfortunately, it is not currently understood exactly what happens from a topological point of view, and consequently we do not know the conditions to be satisfied for a solution to be global. But the point is that the Hilbert space is severely reduced, and we need to see exactly how. Future understanding of the geometric and topological properties of particles and their interactions, especially in a quantized theory, may shed light on this issue and explain both why the quantum states appear classical at a macroscopic level, and how quantum measurements yield definite outcomes without invoking the wavefunction collapse.

It is interesting that Schr\"odinger used a consistency condition on the space to obtain the energy eigenstates of the Hydrogen atom, in the form of a boundary condition of the wavefunction of the electron at infinity \cite{schrod:1926}. The usual view on this is in terms of eigenstates of the Hamiltonian, but in terms of a wavefunction on space, this turned out to come from boundary conditions at infinity.
The requirement of global consistency is similar, but on spacetime rather than space, in fact, on the spinor and gauge bundles.

Unfortunately, at this stage such a solution is still speculative, in the absence of a better understanding of the topological constraints of the fields due to the topology of the fiber bundles or other reasons.

However, we can postulate such constraints and obtain an effective description.
\begin{principle}[of global consistency]
\label{principle:constraints}
The macro laws of the dynamical system $\mc{M}$ constrain the allowed solutions of the micro quantum dynamical system $\mc{Q}$.
Only those solutions of the {\schrod} equation that are consistent with the observations at the macro level $\mc{M}$ are accepted.
\end{principle}
The laws of $\mc{M}$ are effective laws, most of them obtained from the classical system $\mc{C}$, as well as empirical. For example since we do not observe macro {\schrod} cats or superpositions of states of the measurement device indicating different outcomes, we may assume that there is an effective law constraining the solutions of the {\schrod} equation to prevent such ``grotesque states''.

Principle \ref{principle:constraints} is consistent with Condition \ref{condition:universal_law} that the dynamics is given by a universal law, which is the {\schrod} equation. Unlike collapse, it does not break the evolution, but it restricts the admissible solutions of the {\schrod} equation.

But what is relevant to our discussion is that, if this is the case, then we have an explanation for the restrictions on the initial conditions: only those initial conditions leading to globally consistent solutions on the entire spacetime are admitted. And while such severe restrictions appear as a conspiracy to an observer experiencing the flow of time, a bird's eye view of the block universe would see everything just as the natural condition that the solutions are global.

Hence, despite the fact that the block universe is sometimes seen as being at odds with Quantum Mechanics, it complements it and offers a possible solution to its problems, by allowing us to think at the observations as constraints spread in spacetime, which have to be satisfied by the low-level solution of the {\schrod} equation.

The solutions of Schr\"odinger's equation are unitary, but when we think about ``wavefunction'', we think at two different things. 
On the one hand, as long as no measurement is made on a quantum system, we can regard the wavefunctions as a function on the configuration space, but it also admits a representation as fields on spacetime \cite{Sto2019RepresentationOfWavefunctionOn3D}. On the other hand, no measurement can completely determine the quantum state of the entire system made of the observed system and the measurement apparatus (and other relevant parts of the environment). This means that our measurements cannot be used to determine the future outcomes of the measurements of the observed system, but only probabilities. I do not yet have a proof whether these probabilities are exactly those given by the Born rule or not, but the possibility to recover the Born rule exists. The real, ``ontic'' wavefunction will be never completely determined, but what we can know is an ``epistemic'' wavefunction. The notions ``ontic'' and ``epistemic'' may be used differently by different authors, but I will stick with the definition that there is a real, ``ontic'', physical wavefunction, and ``epistemic'' is the partial knowledge of the ontic wavefunction of the universe, which translates as probabilities when it comes to learn more about it through measurements.

Some usual terms associated to the wavefunction have a statistical connotation: expectation value, uncertainty, \etc. These terms retain their statistical meaning when we are talking about the epistemic wavefunction, which is probabilistic. But when we are talking about the ontic wavefunction, the ``expectation value'' of an operator $\hat{\mathcal O}$ simply the mean value field $\bra{\psi}\hat{\mathcal O}\ket{\psi}$, and similarly for uncertainty and other terms.

In particular, 
\begin{equation}
\label{eq:stress_energy_expectation_value}
\langle \hat{T}_{ab} \rangle_{\ket{0}} = \bra{0}\hat{T}_{ab}\ket{0}
\end{equation}
is not a true expectation value, but, after suitable regularization, a physical field on spacetime, which consists of mean values. This will be relevant in the next section.

%------------------------------------------------------------%
\section{Quantum Theory and General Relativity}
\label{s:quantum_and_relativity}

Since the intention of this paper is to explain how the block universe is consistent with QM and can even help to solve some of its problems, it is necessary to analyze the tensions between QM and GR. In particular, we need to discuss the necessity of a theory of quantum gravity, how this may work, and what will survive from QM and GR. In particular, it would be interesting to see how Conditions \ref{condition:qeinstein}, \ref{condition:qg}, and \ref{condition:bh-info} can be satisfied.

Despite the measurement problem and the problem of the emergence of the classical world out of Quantum Mechanics, Quantum Field Theory is incredibly successful in describing the microscopic scale. At large scales, General Relativity does the job with equal success. Both theories were extensively tested, with great precision and in diverse situations, and their predictions turned out to be accurate every time. Not everything is understood, for example in cosmology there are the problems of inflation, dark energy, and dark matter. We do not now yet if they require changing the Standard Model, Quantum Theory, or General Relativity.

However, since the essential parts of both theories have to be true, we need to understand how they work together. When we try to combine them, they seem to be in a conflict. The general trend is to consider that one of the two theories will have to be radically changed, or even replaced, and that this one is GR. The main invoked reasons are the successes of Quantum Theory, the prediction of singularities in GR, the black hole entropy, and the information loss paradox. But mainly the resistance of gravity to be quantized in the same way as the other forces.

The predictions of QFT are confirmed to great accuracy, in particular in the case of the anomalous magnetic dipole moment. On the other hand, GR's predictions are confirmed with at least the same accuracy, and even better in the case of the Hulse-Taylor binary pulsar PSR 1913+16 \cite{HP96}. I think both theories were confirmed incredibly well in all predictions that could be tested, so it would not be fair to hold the success of QFT against GR while ignoring its own success.

It is true that GR predicts the occurrence of singularities \cite{Pen65,Haw66i,Haw66ii,Haw67iii,HP70}. But QFT is plagued with infinities too, in both the UV and IR regimes. It is true that renormalization worked particularly well in the predictions of great precision, and the \emph{renormalization group} (actually it is a semigroup) idea provides a deeper understanding, but it is not as if the infinities are completely cured. In fact, they are rather an artifact of the perturbative expansion. But the singularities in GR are not worse at all. Indeed, it is more difficult to see this, but it turns out that differential geometry \cite{Sto11a}, and in particular GR \cite{Sto12b}, can be formulated in a completely invariant way which is equivalent to the standard formulation outside singularities, but which allows us to write field equations, including an equation equivalent to Einstein's, even at singularities (see \cite{Sto13a} and references therein).
Moreover, the same solution of the problem of singularities was used as an ingredient in an explanation of the observed values of the expansion rate of the universe \cite{unruh2017dark_energy,unruh2019dark_energy}.

As for the problem of the quantization of gravity, do we really need to do in the same way as other forces are quantized? They are of apparently different nature, gravity is due to spacetime curvature, and the other forces are gauge fields. There seems to be no \emph{apriori} reason to do the same for gravity, which is just inertia on curved spacetime.
Could the theory of quantum fields on curved background, where the expectation value of the stress-energy operator is introduced in the Einstein equation \eqref{eq:einstein_eq}, be enough? This is usually called \emph{semi-classical gravity} \cite{LichnerowiczTonnelat1959theoriesRelativistesGravitation,rosenfeld1963quantizationOfFields}, and is considered to be an approximation of the true quantum gravity which remains unknown. But could it be enough?

In the case of a single particle which never collapses, it seems at first sight that there is no problem with semi-classical gravity: even if the particle is set in various superpositions, what matters for GR is its stress-energy tensor. If there are more particles, or even an undefined number as it is often the case in QFT, particularly on curved spacetime, the expectation value of the stress-energy operator does the same job. And since there is never a wavefunction collapse even though the macroscopic world seems classical and the measurements have definite outcomes, the stress-energy tensor is conserved and behaves as any classical source for gravity and spacetime curvature. This is made even more plausible by the result that the wavefunction and the QFT states are equivalent to classical fields, discussed in Section \sref{s:wavefunction_on_spacetime}.

However, several arguments seem to suggest that semi-classical gravity is not enough.

A known argument was given by Eppley and Hannah in \cite{EppleyHannah1977NecessityQuantizeGravitationalField}, where it is argued that if gravity is classical, one could use it to measure a quantum state  in a way which violates the uncertainty principle. Even though their argument was refuted \cite{Mattingly2006EppleyHannahFail,HuggettCallender2001WhyQuantizeGravity,Albers2008MeasurementAnalysisAndQuantumGravity,Kent2018SimpleRefutationEppleyHannah}, it worth discussing it here, especially because it was not discussed in conjunction with the unitary interpretation presented in this article, which introduces new subtleties.

Eppley and Hannah  identify two cases:
\begin{enumerate}
	\item 
If the wavefunction of the observed system is collapsed during the classical measurement with a gravitational wave, such a measurement could lead to violations of the momentum conservation.
	\item 
If the classical measurement does not collapse the wavefunction of the observed system, then we could use it to send superluminal signals.
\end{enumerate}
To prove this argument, one considers a particle in two boxes, one given to Alice, and another one to Bob. Then, Bob can ``look'' inside his box using arbitrarily non-disturbing classical measurements, and determine the state of the wavefunction in his box without collapsing it. By this, he would be able to know if Alice looked in her box. So, it is concluded, semi-classical gravity may lead to the transmission of signals faster than $c$.

But if there is no wavefunction collapse, option (1) is easily avoided without quantizing gravity.
At first sight it may seem that option (2) is avoided as well, because even if Bob does not collapse the wavefunction, Alice does it.

However, in our approach, measurements can introduce challenges. We have seen that if quantum measurements take place by unitary evolution, the present state of the particle should depend on the future measurement. So, if we can use such weak classical gravitational fields to determine the state of a particle without disturbing it too much, we can extract information about the choice of the future quantum measurement and its outcome. Since this choice is supposed to be free, it can be used to send information back in time.
It can be argued that this situation is similar to the weak measurements \cite{TamirCohen2013WeakMeasurements} of the state before undergoing a quantum measurement -- both theory and experiments show that there is a correlation between the weak measurements and the future quantum measurements, but it is not enough to signal back in time \cite{aharonov2012future-past}. However, if by using classical fields one can extract more information about the state while maintaining the disturbance small enough, one could be able to signal back in time. 
Therefore, at least for this case, the most natural way out of this problem seems to be to quantize somehow the gravitational field, so that any measurement we make using gravity disturbs the observed system like quantum measurements do.

Another argument was given by Page and Geilker in \cite{PageGeilker1981IndirectEvidenceQM}, where an experimental result is described, testing whether semi-classical gravity with no wavefunction collapse holds. The authors assume that the unitarity could only be ensured by the Many Worlds Interpretation. The result is interpreted as a falsification of semi-classical gravity because no superposition of macroscopic massive objects at different locations was found. 

In relation to \cite{PageGeilker1981IndirectEvidenceQM}, according to Kent \cite{Kent2018SemiQuantumGravityBellNonlocality}, semi-classical gravity combined with MWI raises the question whether or not psychophysical parallelism (of MWI) ``should or could it apply to the classical gravitational degrees of freedom as well as, or even instead of, the quantum matter?''. He suggests to ``think of the structure of space-time as a (or even the) substrate for a lawlike description of consciousness''. Stating from such questions, Kent analyzes several hybrid models where a single classical spacetime is the same for all branches. Accordingly the choice of the preferred spacetime can be made randomly (according to the Born rule) by one of the Everettian branches, or by some local hidden-variables \etc. 

However, there is an alternative explanation of the results of Page and Geilker in \cite{PageGeilker1981IndirectEvidenceQM}, which favorable to semi-classical gravity without collapse: assume unitary evolution for a single world rather than as in MWI, such that macroscopic massive objects do not end out up in a superposition of being in different positions. According to the proposal supported in this article, global consistency may prevent Schr\"odinger cats, including superpositions of macroscopic massive objects at different locations. Hence, if the hypothesis that global consistency allows for a unitary solution which never collapses discontinuously, and at the same time prevents macroscopic superposition, is true, no serious theoretical or experimental evidence against semi-classical gravity is found in \cite{EppleyHannah1977NecessityQuantizeGravitationalField} and \cite{PageGeilker1981IndirectEvidenceQM}.

Therefore, the absence of a collapse makes Quantum Theory and GR more compatible. But this does not exclude the possibility that quantization of gravity is still needed, and in fact still seems to be required.

Another often-encountered argument that GR has to be changed is that if we would try to experimentally probe spacetime at Planck scales, we would not be able to do it, because the high energies involved would lead to the creation of micro black holes. While indeed this will be the case, this only sets a limitation on our experimental possibilities, one not due to our limited technology, but to in-principle limitations due to backreaction. But the universe is under no contractual obligation to allow us to probe spacetime in these regimes. Such a limitation of our experimental possibilities does not require replacing spacetime with something else, especially since for anything else we would have the same experimental limitations.

Another reason sometimes invoked when semi-classical gravity is said to be only an approximation comes from the \emph{black hole entropy}. It is said that the event horizon is homogeneous, and it would not be able to store information, so either the horizon or the spacetime inside the black hole have to be discrete to represent microstates which would give the right entropy. But the entropy calculations are done with the help of QFT on a spacetime regime. The black hole entropy is calculated based on the quantum information of the particles falling in the black hole \cite{bekenstein1973black,BCH1973fourLawsBH,Wal94,jacobson1996introductory}. Hence, the black hole entropy was derived in the strict framework of QFT on a curved background. The same holds for \emph{black hole evaporation} -- it was derived in the same framework \cite{Haw75ParticleCreation,Haw76}. This means that both black hole entropy and black hole evaporation are predicted and explained by QFT on spacetime already. Then why would we need another theory to derive the same predictions \cite{Stoica2018RevisitingBlackHole}?

Moreover, the information is considered lost only because the black hole singularities seem to lead to Cauchy horizons, but if the global hyperbolicity is not necessarily destroyed by singularities, as explained in \cite{Sto13a}, then this problem can be solved within the framework of GR itself \cite{Sto14c,Stoica2018RevisitingBlackHole}.

All these arguments concur in showing that the main problems we tend to think to require a Quantum Theory of spacetime itself may in fact be solved within QFT on curved spacetime. There is no necessary reason to think otherwise. Surely, it may still be true that there is a better theory, maybe one of the proposals of quantum gravity or maybe one we do not know yet, but the currently known arguments no longer seem to be so strong.

While up to this moment there is no theoretical or experimental challenge to the classical GR spacetime, this may be due to the fact that we are still in an early stage of understanding these phenomena. It could be the case that theoretical progress is made if a no-go theorem is found, which really gives no other choice but to give up classical spacetime. On the experimental side, classical spacetime seems to be safe for now, because of the very high energy required for the experiments we could imagine so far.

But recently a new type of experiment was proposed, which could be more within our present accessibility.
A new proposal was put forward at the same time, independently, by Bose {\etal} in \cite{BoseEtAl2017SpinEntanglementWitnessForQuantumGravity}, and by Marletto and Vedral in \cite{MarlettoVedral2017WhyNeedQuantizeEverything,MarlettoVedral2017GravitationallyInducedEntanglement}. By the name of the main proponents, the suggested experiment is called BMV.
The BMV proposal consists of using the gravitational field to entangle two qubits.
The experiment should be arranged so that the qubits cannot interact with one another directly, but only with the gravitational field. If they turn out to be entangled, it follows that gravity must be quantum, because a classical channel could not mediate the entanglement. A more recent review of this proposal, and replies to the criticism, is contained in \cite{MarlettoVedral2019AnswersQuestionsBMVExperiment}. A recent proposal of a table-top test of quantum gravity by looking for the creation of non-Gaussianity, which be applied to a single rather than multi-partite quantum system, was made in \cite{HowlVedralChristodoulouRovelliNaikIyer2020TestingQuantumGravitySingleQuantumSystem}.

It is very plausible that the BMV experiment or a variation of it can be done in the near future, and that it leads to entanglement mediated by gravitational field.
As the authors explain, a negative result would not exclude the possibility that gravity is quantized.

On the other hand, even a positive result could be explained without necessarily quantizing gravity. A possible way to do this appeals to the mechanism for the EPR experiment proposed in \cite{deBeauregard1977TimeSymmetryEinsteinParadox,Rietdijk1978retroactiveInfluence,price2015disentangling} and discussed in Section \sref{s:no_collapse}. Remember that the retrocausal explanation of the EPR-B experiment suggested in these references is local, and the particles are already separate before being detected by Alice and Bob. Thus, the correlations that are normally attributed to entanglement can be explained by local and separable descriptions of the particles. This approach can also work to transfer entanglement between particles. At the micro level, this retrocausal description looks like a classical channel.
This can be seen as merely a loophole of the BMV experiment, but in a retrocausal theory in which even the EPR-B experiment is explained like this, it is the central explanation, not just a loophole. In other words, one cannot exclude that BMV is explained in a semi-classical way, without appealing to counterfactual definiteness. However, if the same works for both EPR-B and BWV, it would be hard to say that the particles in the EPR-B experiment are quantum, but gravity in the BMV experiment is classical, since they are described similarly.

While I tried to maintain as much as possible a conservative position, I think there are little chances that semi-classical gravity as it is currently understood is the final answer. A reason is historical in nature: in nonrelativistic QM (NRQM), the electromagnetic field was treated through its potential, which was not quantized. Its quantization became necessary only in QFT. How could NRQM work so well while being hybrid in this sense? A possible answer can be given in terms of the representation of the wavefunction on the $3$D-space given in \cite{Sto2019RepresentationOfWavefunctionOn3D}. In that representation, the fields representing individual particles in a product state exist in separate layers of a bundle over the $3$D-space. If they are charged particles, the corresponding electromagnetic potential, which is classical in NRQM, is confined per layer. In other words, each layer of interacting charged particles has its own classical electromagnetic potential. But this means that even NRQM admits, to some extent, superpositions of classical fields.

To see how this applies to the BMV experiment, in the eventuality that the experiment will confirm it, we appeal to a relativistic version of the effect, developed in \cite{ChristodoulouRovelli2019PossibilityQuantumSuperpositionGeometries}. In \cite{ChristodoulouRovelli2019PossibilityQuantumSuperpositionGeometries}, the entanglement between the two particles and the gravitational field is described in the form
\begin{equation}
\label{eq:entanglement_g}
\ket{\psi} = \frac 1 2 \(
\ket{\tn{LL}}\ket{g_{d_{\tn{LL}}}}
+\ket{\tn{RR}}\ket{g_{d_{\tn{RR}}}}
+\ket{\tn{LR}}\ket{g_{d_{\tn{LR}}}}
+\ket{\tn{RL}}\ket{g_{d_{\tn{RL}}}}
\),
\end{equation}
where $\tn{L}$ and $\tn{R}$ indicate whether the particles in the pair are in the left, respectively the right arm of the interferometer, and the states of the form $\ket{g_{d_{\tn{JK}}}}$, $J,K\in\{\tn{L},\tn{R}\}$, correspond to the metric tensor for each term of the superposition.

Recalling the multi-layered representation of the wavefunction on the $3$D-space given in \cite{Sto2019RepresentationOfWavefunctionOn3D}, this means that there is a different metric tensor $\ket{g_{d_{\tn{JK}}}}$ in each of the layers in the superposition. In other words, there is a superposition of geometries. But this only means that each of the terms in the superposition have different metric $\ket{g_{d_{\tn{JK}}}}$ and different covariant derivatives $\nabla_{\ket{g_{d_{\tn{JK}}}}}$, just like in NRQM each term in a superposition of particles may have a different classical electromagnetic potential or connection.

But such superpositions of geometries are not necessarily damaging for spacetime at all. Spacetime is a four-dimensional topological manifold, while the metric tensor $g$ is a field, with all of the geometry it entails. Since in \cite{Sto2019RepresentationOfWavefunctionOn3D} it was shown how all possible entangled states in QM can be represented as fields on the physical space or spacetime, this means that superpositions of geometry can too. 
Spacetime remains a four-dimensional topological and differentiable manifold, but the geometry is generalized so that each layer has its own metric tensor.

In fact, this works even if we quantize gravity and obtain spin $2$ gravitons, since the obtained representation is general enough.

This means that the unitary interpretation and the block universe view proposed here are consistent with Condition \ref{condition:spacetimeg}.

Not only such superpositions of geometries are not against the block universe, but they may help in the perturbative quantization of gravity (Condition \ref{condition:qg}).
It is known that, normally, quantization of gravity cannot be done perturbatively. However, various suggestions of modifications are made, going under the common umbrella of \emph{dimensional reduction} techniques. They are based on \emph{ad-hoc} assumptions which have the purpose to reduce the divergences as the perturbative expansion goes to the UV limit \cite{Sto12d}. It turns out that there is no need to make these \emph{ad-hoc} hypotheses, since the treatment of singularities given in \cite{Sto13a} ensures automatically the conditions imposed in several of these approaches. This means that, when we sum over different classical geometries, in the UV limit and in a position basis, the metric becomes degenerate and cancels the infinities automatically, as required by several of the dimensional reduction proposals \cite{Sto12d}.

%------------------------------------------------------------%
\section{The post-determined block universe}
\label{s:post_determined_block_universe}

As humans, very early in life we become aware that events that already happened cannot be changed, and that future events, although unpredictable, can be influenced by our present actions. This intuition is so deeply hardwired in our world view, that it seems unnatural to even question the idea that past and future do not exist, but only present does. Therefore, it comes as a surprise that thinkers like Parmenides, Boethius, Augustine of Hippo, Anselm of Canterbury, D$\bar{\tn{o}}$gen, and others either claimed, or seem to suggest the opposite view, that past, present, and future are equally real. The first, more common position, is called \emph{presentism}, while the second position is called \emph{eternalism}. A major advocate of eternalism was McTaggart \cite{McTaggart1908UnrealityOfTime}.

With the discovery of classical mechanics,  and its determinism, we started to realize that, at least mathematically, the past, present, and future states of the universe are encoded in the state of the universe at any time $t_0$. As Laplace explained it, if the state of the universe as a physical system is known with perfect accuracy at a time $t_0$, and the equations or motions are known, one can, in principle, calculate the state of the universe at any other time. This is a property of the differential equations expressing the physical laws of classical mechanics. What appear to us as random events are in fact due to the absence of complete information about the state. So it is surprising that, knowing that past and future are encoded in the present, physicists did not consider eternalism as a viable option. We had to await the discovery of Special Relativity, with its relativity of simultaneity, to take this idea seriously. Accordingly, if two observers in relative motion agree that a certain event took place at a certain time, they will in general disagree about what other events happened at the same time with the first one. Spacetime as a block universe (BU) appeared gradually in the works of Lorentz, Poincar\'e, Einstein, and Minkowski \cite{Minkowski1910Spacetime}.
The idea of a spacetime in which time and space are inseparable seemed too farfetched to many physicists, who either rejected Special Relativity, or considered spacetime as a convenient mathematical tool, but eventually the BU and the eternalist position started to be seen as endorsed by Special Relativity.

One of the major objections against the BU was its apparent incompatibility with \emph{free-will} (a notion which I do not know how to define or explain, but it may be more than the one in Condition \ref{condition:free_will}). A hybrid proposal was made by C.D. Broad in 1923, the \emph{growing BU} \cite{broad1923scientificThought}. According to this proposal, both past and present exist, but the future does not yet exist. The past is a block which grows continuously with the passage of time, which is connected to the human experience. 
In a totally different direction, to reconcile free-will with classical determinism, \emph{compatibilists} took the position that freedom means acting according to one's nature, and thus determinism is not only consistent with freedom, but it allows it to be expressed. This compatibilist position is consistent, in particular, with the BU. But Hoefer took a step beyond compatibilism, and proposed that we are even free to make choices affecting the initial conditions of the universe \cite{Hoefer2002Freedom}, within the framework of the BU of classical relativity.

Soon after the discovery of Special Relativity, Quantum Mechanics appeared, with its probabilistic Born rule and Heisenberg's uncertainty principle. This time, it seemed difficult to explain the probabilities as a mere lack of knowledge, and they came to be understood as irreducible. Results like Bell's theorem and the Kochen-Specker theorem \cite{KochenSpecker1967HiddenVariables,Bel66} were taken by many as endorsing this position. This happened despite the fact that at least a deterministic interpretation of QM was known at that time, the {\pwt} of de Broglie and Bohm \cite{Bohm52}, which is consistent with both of these no-go theorems, being both nonlocal and contextual. The presentist position seemed to win.
Because of this, it is usually believed that the BU picture can only hold for classical GR, but quantum indeterminism is necessarily incompatible with it. This is sometimes taken as evidence that the \emph{eternalism} of GR does not hold, and it should be replaced by \emph{presentism}, which can be expected to be consistent with the wavefunction collapse and with nonlocal interpretations of QM.
But the evolving or growing BU model \cite{broad1923scientificThought} was seen by Ellis as consistent with QM, since the growth can be seen as taking place in a nonlocal and indeterministic way as new quantum measurements are performed \cite{ellis2006physicsSpacetimeEBU,ellis2008FlowTimeFQXiEBU}. Later, in 2009, Ellis and Rothman proposed a \emph{crystallizing  BU}, which captures the ``quantum transition from indeterminacy to certainty'' \cite{Ellis2010CrystallizingBlockUniverse}. This model also includes retrocausal influences, and the crystallization does not always occur simultaneously, in some cases having to wait for future experiments to be done, as it is the case with the delayed choice experiment \cite{Whe78}. Thus, the model provides support in particular to retrocausal interpretations of QM in which there is collapse, the collapse leading to the growth of the past block, and quantum indeterminacy allowing the future to be open.

It is possible to conceive a \emph{splitting} or \emph{branching BU}, which may be the BU version of the Many Worlds Interpretation, see for example \cite{Tegmark2014OurMathematicalUniverse}, and for some criticism \cite{ellis2014evolvingBlockUniverse}. One can imagine a branching BU, in which each branch is an evolving or crystallizing BU, but on the overall they are all part of a tree-like structure.
But there is a problem: in MWI, the wavefunction is real, but it is defined on the configuration space. Even for each branch, there is entanglement, at least because each branch contain atoms, which contain entangled particles. So, the wavefunction has to be replaced somehow, with an equivalent structure defined on space or spacetime, rather than on the configuration space. In fact, this applies to the other interpretations which make use of the wavefunction. Fortunately, as I already mentioned, there is a way to replace the wavefunction by a high-dimensional field defined on space, which is mathematically equivalent to the wavefunction on configuration space \cite{Sto2019RepresentationOfWavefunctionOn3D}.

Despite the tremendous success of nonrelativistic Quantum Mechanics in describing microscopic physical phenomena, the domain of high energies requires something more. Quantum Field Theory appeared as a relativistic quantum theory. Wigner and Bargmann classified elementary particles in terms of representations of the Poincar\'e group \cite{Wig39,bgm:1954}. Thus, as long as there is no collapse, QFT is consistent with relativistic invariance, and it even requires it at the deepest level. But the wavefunction collapse would be in tension with this relativistic invariance, required by the very definition of particles in terms of representations of the Poincar\'e group. This is another reason to take unitary evolution seriously, and this brings us to the central point of this article.

In this article, I argued that in Quantum Mechanics it is possible for the time evolution to be unitary, without discontinuous collapse, at the level of a single world. I argued that this has some advantages, in particular it allows us to have a single law which is not broken even occasionally, it ensures the conservation laws, avoids nonlocality, and makes QFT more consistent with GR, but seems to suggest that the initial conditions have to be very specially chosen, to conspire to make measurements have definite outcomes rather than resulting in Schr\"odinger cats. But these apparent conspiracies seem rather natural if we see them as the consequences of global consistency in a BU. I hypothesized that the global consistency is about the consistency of solutions in the presence of topological constraints of the fiber bundles from spin geometry and gauge theory, and that it manifests itself by reducing the Hilbert space to a subset which does not contain Schr\"odinger cats and where all quantum measurements have definite outcomes.

These arguments suggest the following picture of a \emph{post-determined block universe}. There is a BU, on which QFT is true, and when formulated in the multi-layered representation \cite{Sto2019RepresentationOfWavefunctionOn3D}, in each of the geometries taking part in the superposition, the mean value of the stress-energy operator connect to the spacetime curvature via the Einstein equation \eqref{eq:einstein_eq}.
Alternatively, if semi-classical gravity turns out to be valid despite the general position that it is not enough, the mean value of the overall stress-energy tensor is plugged into the Einstein equation \eqref{eq:einstein_eq}.

Yet not all possible initial conditions can be true, but only those which lead to globally consistent solutions. So we start with a set of possible BUs consistent with the observed coarse graining state, including the previous quantum observations, and as we make new observations, we refine that set by eliminating all of the solutions that are not consistent with the new observations. So far it looks quite like classical physics, but since the initial conditions are severely constrained by global consistency, the resulting correlations can be expected to violate the Bell inequality and its generalizations in a way in which a dynamical system with no global constraints of the initial conditions could not be able to violate. This allows us to have a picture in which quantum measurements happen as predicted by the projection postulate, without breaking the unitary time evolution. In this picture, the BU is not pre-determined, but it is gradually post-determined as we make new quantum observations.

This kind of BU is deterministic, but it is not predetermined in the usual sense. The initial conditions are determined with a delay, by each new measurement and each choice of what to measure. The requirement of global consistency implies a severe restriction on the possible solutions of the Schr\"odinger equation, but since the observers can choose what to measure, it looks like they determine the past initial conditions more, with each new choice. The solution is still deterministic, but it is determined by future choices and the outcomes of measurements. 
We can still think of this proposal as including a form of superdeterminism or retrocausality, if we assume that the initial conditions are fixed from the beginning. But we can also take the position that the quasi-classical limit, which is a coarse graining of the low-level quantum state, evolves by usual weak causality in an indeterministic way (\cf Condition \ref{condition:weak_causality}). As observers, we start with the full set of quantum states consistent with our previous macroscopic observations, and then reduce them as new measurements provide more information. And since we never know the true quantum state, but only outcomes of our observations made on subsystems, these observations allow us to predict only probabilities, or an epistemic wavefunction which is an approximation of the ontic wavefunction, and has to be readjusted after each new measurement by invoking the wavefunction collapse.

For the reader concerned about the determinism inherent in the post-determined BU, there is freedom in the initial conditions. The proposal is consistent with ``free-will'' both in the sense of Condition \ref{condition:free_will}, and like in Hoefer's model \cite{Hoefer2002Freedom}, but, in addition, it works in the context of QM, to explain measurements by unitary evolution alone. In relation to unitary evolution without collapse, this type of BU was suggested in various places by the author \cite{Sto08b,Sto08e,Sto08f,Sto12QMc,Sto13bSpringer}.
The post-determined BU can be compared with the crystallizing BU proposal \cite{Ellis2010CrystallizingBlockUniverse}, since the latter also has retrocausal influences due to delayed choices, but in the post-determined model, these apparent retrocausal influences apply to the entire past history of the universe. The post-determined BU can also be compared with the splitting BU of MWI, but there is no splitting at the fundamental level, only at the coarse grained level we can consider that there is branching. We start with a set of possible block universes consistent with our current observations, and as we add new observations, we refine the set of possible block universes, by eliminating those that are inconsistent with the new data. Indeterminism, which can also be interpreted as branching, is manifest only at the macroscopic or coarse grained level. In the post-determined BU, here is no actual growth or crystallization or branching at the fundamental level, but only at the coarse grained level accessible to the observers.
The post-determined BU accommodates the main advantage of these proposals, which is their consistency with Quantum Mechanics, by solving the problems mentioned in \sref{s:collapse_problems}, and restores the full advantages of the BU picture at the same time. 

The post-determined BU is as deterministic and fixed as the standard one from the bird's eye view of someone who knows completely the ontic wavefunction of the universe. From the point of view of someone who is part of the universe itself, like us, it may look as a growing BU, with the amendment that the growth is not only towards the future, but at the quantum scale, because of global consistency, it also seems to be growing towards the past, giving the impression of retrocausality. But this retrocausality is not accessible to us to send messages into the past or at a distance, being forbidden by the fact that we only have ``clearance'' to approximate eigenstates, and not to the full quantum state of the observed systems.

By eliminating the discontinuous collapse, we remove important obstructions that seemed to put Quantum Theory and General Relativity at odds with each other, as seen in Section \sref{s:quantum_and_relativity}.

%------------------------------------------------------------%
\section{Empirical predictions of the post-deterministic interpretation}
\label{s:experiment}

While by the name ``interpretation of Quantum Mechanics'' one expects that such theories are indistinguishable from standard QM, the type of unitary interpretation discussed here makes different empirical predictions. The entire quantum part of the proposed theory is deduced from the assumption that the evolution happens for our single world all the time according to the {\schrod} equation \eqref{eq:unitary_evolution}, and there is no collapse, for example through Theorems \ref{thm:collapse_breaks_conservation} and \ref{thm:unitary_ic}. 

Since the presence of a collapse implies violations of conservation laws \cf Theorem \ref{thm:collapse_breaks_conservation}, if we would be able to look for such violations, we would be able to distinguish unitary models at the level of a single world, of the type discussed here, from interpretations in which collapse is genuine (for example the Copenhagen Interpretation and GRW) or effective collapse for single worlds which is absent for the universal wavefunction (as in MWI and {\pwt}).

An experimental verification of the conservation laws during measurements involves the measurement of the total system, which consists of the observed system and the apparatus (including the relevant environment). Suppose we want to search for violations of the angular momentum of a Silver atom during a quantum measurement. Then the difference between the angular momentum of the observed atom before and after the measurement should transfer somehow (\ie through local interaction) to the apparatus. However, several problems appear here: the apparatus is very large by comparison to the observed system and the difference of angular momentum we are looking for is very small, to measure the total angular momentum of the combined system before and after the measurement is tantamount to a kind of \emph{Wigner's friend experiment} \cite{Wigner1962WignerFriend}, the reading of the display of the apparatus by the observer would induce additional changes, the environment should be isolated completely, and quantum fluctuations and Heisenberg uncertainty may perturb significantly such a small quantity. All these make very impractical this type of experiment, at least if we are looking to measure directly the involved quantities for the combined system.

A more concrete idea was proposed by Schulman \cite{schulman2012experimentalTestForSpecialState,schulman2016specialStatesDemandForceObserver,schulman2016lookingSourceChange}, based on observing a magnetic field interacting with a Silver atom in a modified Stern-Gerlach experiment. If the angular momentum of the atom changes into an eigenstate, if there would be a collapse, this change would not require an interaction in order to reorient the atom's angular momentum. But if the evolution is unitary, this change has to be realized by an interaction. Due to the apparently conspirational nature of such interactions, the interaction can take place during various phases of the experiment, but Schulman estimates that the most likely interaction is due to a fluctuation of the magnetic field of the Stern-Gerlach device itself, which is not required by other interpretations where collapse takes place or is at least effective for each branch. The experiment consists of two parts, both involving Stern-Gerlach devices: a preparation of the atom in a certain eigenstate, and a detection. If the evolution of the total system is unitary without collapse, it is expected that the magnetic field gives a kick to the angular momentum of the atom, to reorient it so that it is found in an eigenstate. If the rotation needed is by an angle $\theta$, it can be accomplished as well by any angle $\theta+2n\pi$, for any integer $n$. For large values of $\abs{n}$, the fluctuation of the magnetic field is larger, and therefore easier to detect. In this case, the fluctuation of the field can be large enough to compensate for the Heisenberg uncertainty. 

As another way to make the effect more detectable, Schulman proposes to prepare ensembles of atoms and send them together through the second Stern-Gerlach device. For ensembles, it is possible that a single larger fluctuation of the magnetic field reorients more atoms at once, which would lead to an observation of clusters of consecutive outcomes corresponding to the same eigenvalue, alternated with clusters of outcomes corresponding to the other eigenvalue, so that the average is as predicted by Quantum Mechanics. There is no reason to expect such clustering if the collapse is genuine, but if an interaction is needed to reorient the spin, it is more likely that the same fluctuation affects more atoms in the ensemble. Schulman estimates that both these experimental tests are feasible with the current capabilities.

Another possibility arises from Schulman's proposal, which is due to the fact that the model is not known to predict the Born rule. In order to obtain the Born rule, Schulman attempts to derive the distribution of fluctuations as a function of the rotation angle $\theta$ capable to give the right probabilities. However, he finds out that there is no such normalizable function, so there is no probability distribution which would give the precise Born rule (\cite{schulman2016specialStatesDemandForceObserver} Appendix B). The problem would be solved if the function would be $f_0(\theta)=\frac 1{\theta^2}$, but this is not normalizable. But by making use of the Cauchy distribution $C_a(\phi)=\frac{a/\pi}{a^2+\phi^2}$, Schulman shows that there is a probability distribution that can be made close enough to give the Born rule with any precision, by appropriately choosing the parameter $a$. Schulman proposes that the parameter $a$ can be bounded from experiments. But it is possible that, since there is no actual probability distribution which would give the exact Born rule without requiring a renormalization, the parameter $a$ is in reality fixed and leads to observable deviations from the Born rule. Therefore, while it is not clear that Schulman's proposal violates the Born rule, since his distribution is able to approximate it with any desired precision, observing violations of the Born rule may fix the parameter $a$, which would allow to test if indeed the probability distribution is the one proposed by Schulman. It seems very unlikely to find systematic deviations from the Born rule, since it has unique properties, as known for example from Gleason's theorem \cite{GleasonTheorem1957}. However, since in the unitary single world interpretations of this type measurements are very special events compared to the events allowed by quantum superposition, by requiring special configurations of the observed system and the measurement apparatus, it is not excluded that such deviations from the Born rule exist for very small values of the angle $\theta$.

In \cite{schulman2016lookingSourceChange}, Schulman proposes another experiment, also based on angular momentum, but this time for photons. The experiment consists in sending a circularly polarized photon through a birefringent crystal, then through a beam splitter which polarizes it linearly, then measuring the polarization, and detecting unusual movement taking place in the birefringent crystal, as a result of the change in angular momentum due to the detection. Again, a theory or interpretation of quantum mechanics in which collapse or effective collapse takes place, no interaction is needed to change the polarization of the photon. But if everything happens unitarily, interactions should account for the change in polarization, and the birefringent crystal should change its motion. Schulman's experiment starts by preparing the photon in a state of helicity $1$ (angular momentum $\hbar$) along the $z$-axis, by sending it through the birefringent crystal. Then, the photon encounters a beam splitter that separates it into $S$ and $P$ polarizations -- linear polarizations along the $x$ and $y$-axes respectively, both with $0$ helicity along the $z$-axis. Then, the photon is detected, which allows to infer its polarization state. Schulman estimates that the change of the angular momentum is most likely due to the interaction with the birefringent crystal. Of course, the crystal has to be small enough to change its state of motion accordingly in an observable way, but Schulman is optimistic about the feasibility of this experiment, due to the current possibilities to make microscopic birefringent crystals \cite{AritaMaziluDholakia2013LaserInducedRotationAndCoolingOfTrappedMicrogyroscopeInVacuum} and measuring small changes in angular momentum \cite{RothmayerEtAl2009IrregularSpinAngularMomentumTransfer,KolesovEtAl2013MappingSpinCoherence}.
However, I think that this experiment should be designed and realized while taking into account that, if the birefringent crystal is very small, by interacting with the photon the two may become entangled, and standard Quantum Mechanics may predict the same correlation between the observed polarization and the motion of the crystal, in a similar way as in the EPR experiment. If this possibility is not excluded from the experiment, this would make it useless for the present purpose.

To summarize, there are already proposals that are expected to distinguish empirically between theories which involve wavefunction collapse on the one hand, whether it is a genuine collapse as in the Copenhagen Interpretation or collapse theories like GRW, or merely effective collapse as in MWI and {\pwt}, and on the other hand theories with unitary evolution and no collapse at the level of single worlds like the ones discussed here.

%------------------------------------------------------------%
\section{Open problems}
\label{s:open}

The full construction presented in this article was based on the unitary interpretation for a single world \cite{schulman1997timeArrowsAndQuantumMeasurement,Sto08b,Sto12QMb,Sto12QMc,Sto13bSpringer,Sto16aWavefunctionCollapse,Sto17UniverseNoCollapse}, on the representation of the wavefunction on the physical space \cite{Sto2019RepresentationOfWavefunctionOn3D}, on the theory of singularities developed in \cite{Sto13a}, on its application to quantum gravity in \cite{Sto12d} and to the black hole information paradox \cite{Sto14c,Stoica2018RevisitingBlackHole}, and on the sheaf theoretical framework for physics initiated in \cite{Sto08WorldTheory}.
As a result, the model is consistent with the Conditions 
\ref{condition:ontology},
\ref{condition:spacetimeQ},
\ref{condition:spacetimeg},
\ref{condition:universal_law},
\ref{condition:conservation},
\ref{condition:relativity_of_simultaneity},
\ref{condition:locality},
\ref{condition:strong_causality},
\ref{condition:weak_causality},
\ref{condition:free_will},
\ref{condition:quasiclassicality},
\ref{condition:probabilities},
\ref{condition:qeinstein},
\ref{condition:qg},
\ref{condition:singularities}, and
\ref{condition:bh-info} mentioned in Section \sref{s:intro}.
However, despite being consistent with Condition \ref{condition:probabilities}, it is not known yet if it predicts the Born rule, or if more details or constraints are needed to derive it. This is an open problem at this time. In fact, it is a decisive test for the theory, since if the theory predicts a different probability distribution than what we obtain in the experiments, it should be considered falsified.

Another problem is that, in the absence of a sheaf-theoretic derivation of the theory, I proposed an \emph{ad-hoc} effective model, by postulating Principle \ref{principle:constraints} in Section \sref{s:global_consistency}. Hopefully, a more fundamental explanation will be found in the future.

A possible way to do this may emerge from the study of the sheaf cohomology of the fiber bundles used in gauge theory, as mentioned as a possibility in Section \sref{s:global_consistency}. This remains an open research program. However, experiment is called to decide on its fate, and some proposals were discussed in Section \sref{s:experiment}.

\textbf{Acknowledgment}

The author wishes to thank the anonymous reviewers whose suggestions helped improving the quality of the article.

On behalf of all authors, the corresponding author states that there is no conflict of interest. 

%-----------------------------------------------------%

\end{document}